
\documentclass[10pt,fleqn]{article}
\usepackage{psfrag}
\usepackage{amsmath}
\usepackage[dvips]{epsfig}
\usepackage{epsfig}
\usepackage{setspace}
\usepackage{amsmath,amssymb}
\usepackage{amsthm}
\usepackage{mathtools}
\usepackage{graphics}
\usepackage[pdftex,colorlinks,bookmarks,pdfpagemode=UseOutlines,linkcolor=black,
pagecolor=black,urlcolor=black,citecolor=black,letterpaper,pageanchor=false]{hyperref}
\oddsidemargin 0.25in
\evensidemargin 0.25in
\textheight 8.5in
\textwidth 6.0in
\parskip 0.0675in
\parindent 0.0in
\pagestyle{plain}
\usepackage[margin=1in]{geometry}
\usepackage[T1]{fontenc}
\usepackage{titlesec}

\makeatletter
\makeatother
\tolerance = 5000
\hbadness = 5000
\newtheorem{theo}{Theorem}[section]
\newtheorem{lemma}{Lemma}[section]
\newtheorem{df}{Definition}[section]
\newtheorem{cor}{Corollary}[section]
\newtheorem{assump}{Assumption}[section]
\newtheorem{assert}{Assertion}[section]
\newtheorem{remark}{Remark}[section]
\setlength{\parindent}{0pt}
\setlength{\parskip}{\bigskipamount}
\newcommand{\bl}{\begin{lemma}}
\newcommand{\el}{\end{lemma}}
\newcommand{\be}{\begin{equation}}
\newcommand{\ee}{\end{equation}}
\newcommand{\beqn}{\begin{eqnarray}}
\newcommand{\eeqn}{\end{eqnarray}}
\newcommand{\bt}{\begin{theo}}
\newcommand{\et}{\end{theo}}
\newcommand{\bd}{\begin{df}}
\newcommand{\ed}{\end{df}}
\newcommand{\ba}{\begin{assump}}
\newcommand{\ea}{\end{assump}}
\newcommand{\bass}{\begin{assert}}
\newcommand{\eass}{\end{assert}}
\newcommand{\brem}{\begin{remark}}
\newcommand{\erem}{\end{remark}}
\newcommand{\bc}{\begin{cor}}
\newcommand{\ec}{\end{cor}}

\numberwithin{equation}{section}

\long\def\comment#1{}

\DeclareMathOperator*{\argmax}{arg\,max}

\title{Modeling Toll Lanes and Dynamic Pricing Control
\author{Elena G. Dorogush and Alex A. Kurzhanskiy}
}

\begin{document}
\maketitle


\begin{abstract}
In this paper we address the problem of dynamic pricing for toll lanes
on freeways.
The proposed toll mechanism is broken up into two parts:
(1) the supply side feedback control that computes the desired 
split ratios for the 
incoming traffic flows between the general purpose and the toll lanes;
and (2) the demand side price setting algorithm that aims to enforce
the computed split ratios.

The split ratio controller is designed and tested in the context of
the link-node Cell Transmission Model with the modified node model
of in/out flow distribution.
The equilibrium structure of this traffic model is presented; and the
case, in which the existence of a toll lane is meaningful, is discussed.

For the price setting, two alternative approaches are presented.
The first one is commonly used, and it relies on the known 
Value of Time (VoT) distribution.
Its shortcoming, however, is in the difficulty of the VoT distribution
estimation.
The second approach employs the auction mechanism, where travelers make
bids on places in the toll lane.
The advantage of this approach is that it enables direct control over
how many vehicles will be allowed into the toll lane.
\end{abstract}

{\bf Keywords}: congestion pricing, toll lanes, feedback traffic control,
value of time, Cell Transmission Model, CTM equilibria

\section{Introduction}\label{sec_intro}
Congestion pricing is an economics concept of using pricing mechanisms to
charge the users of public roads for the negative effect on others generated
by the demand in excess of available supply.
It is one of a number of demand side traffic management strategies that
address traffic congestion.
Other known demand management strategies include: parking restrictions;
park and ride facilities allowing parking at a distance and continuation
by public transport or ride sharing;
reduction of road capacity to force traffic onto other travel modes;
road space rationing, where regulatory restrictions prevent certain types of 
vehicles from driving under certain circumstances or in certain areas; and
policy approaches, which encourage greater use of existing alternatives through
promotion, subsidies or restrictions.

Our focus will be on congestion pricing for Express Toll Lanes (ETL) 
and High Occupancy/ Toll (HOT) lanes.
This type of facilities becomes ever more popular in the U.S.
In San Francisco Bay Area alone, the Metropolitan Transportation Commission
promises the implementation of 550-mile express lane network by 2035,
and all of it with dynamic pricing strategies.

There were numerous studies on congestion pricing over the past two decades.
Optimal tolls under stochastic user equilibria were devised in \cite{smith94}.
Deterministic static equilibrium model for urban transport networks
with elastic demand and capacity constraints was presented in \cite{ferrari95}.
Optimization of tunnel tolls in Hong Kong was investigated in \cite{yanglam96}.
Yang and Bell talked about road pricing in the presence of congestion
delay \cite{yangbell97}.
Other models of congestion toll were discussed in \cite{hearnramana98}.
Wie and Tobin presented two pricing models --- one
static with the same day-to-day demand and road network capacity, and
the other one with time varying traffic conditions and congestion
tolls \cite{wietobin98}.
Bottleneck model with elastic demand was described in \cite{arnott98}.
The model of bottleneck congestion was generalized and optimal peak-load
toll was derived in \cite{mun99}.
Eliasson suggested optimal road pricing policy reducing aggregate 
travel time and distributing toll burden equally among travelers, although
travelers had utlity functions with constant marginal utilities of time
and money, and these marginal utilities were unobservable \cite{eliasson01}.
The second best pricing policy for a static transportation network, where
not all links could be tolled, was presented in \cite{verhoef02}.
De Palma and Lindsey derived the optimal tolls formulae and analyzed them to
reveal the separate influences of traveler heterogeneity, road network
effects, fiscal effects and equity concerns \cite{depalma04}.
In \cite{verhoefsmall04} the properties of various types of public and
private pricing on a congested road network were explored, and it was
shown that welfare-maximizing pricing was more efficient than the
revenue-maximizing one.
Game theoretical analysis of congestion pricing from its micro-foundations,
interaction of two or more travelers, was performed in \cite{levinson05}.
Simulation based analysis of various road pricing schemes was conducted in
\cite{depalma05}.
The analytical formulation of the optimal toll calculation for multi-class
traffic was given in \cite{holguin-veras09}.
In \cite{dong11} two dynamic pricing strategies, reactive (feedback) and
proactive (feedforward) were considered and compared, favoring the latter.
A comprehensive review of the current congestion pricing technologies
was offered in \cite{depalma11}.

It is important to mention the research targeted specifically at the dynamic
pricing of managed toll lanes.
A local feedback strategy for setting tolls is proposed in \cite{zhang08}.
The control algorithm is similar to ALINEA \cite{alinea},
where the portion of vehicle flow allowed into the toll lane for a given
time step is computed from the toll vehicle flow at the previous time step
and the speed observed in the toll and the general purpose lanes.
In the situation when the speed in the toll lane drops below 45 mph,
the congestion in the toll lane is assumed, and tolls are increased quicker
than otherwise.
Proactive dynamic pricing methodology is described in \cite{michalaka09}.
Tolls are determined based on the cell transmission model (CTM) with stochastic
demand and capacity.
The road configuration with a single decision point is considered.
A distance-based dynamic pricing strategy for toll lanes aimed at maximizing
toll revenue is presented in \cite{yang12}.
Here the tolls are set based on the travel time prediction that is
computed using
the stochastic variation of the LWR model \cite{chu11, lighthill55, richards56}.
The optimization problem maximizing the expected toll revenue with constraints
keeping the toll lane in free flow is solved. 
Yin and Lou delivered the proof of concept of a reactive
self-learning approach for determining time-varying
tolls in response to the detected traffic arrivals \cite{yinlou09}.
The approach learns in a sequential fashion motorists' willingness to pay
and then determines pricing strategies based on a point-queue model.
This result was extended in \cite{lou11}, where the CTM 
was used instead of the point-queue model,
and tolls are set based on the predicted traffic state.
These papers assume that toll lane should be kept in free flow, and the
corresponding constraints are imposed.
The logit model is used to implement the driver's lane choice.

In this paper, we approach dynamic pricing control from both, the supply
and the demand sides with the main focus on the supply side.
The supply side control is considered in the context of the
Cell Transmission Model (CTM) \cite{daganzo95} with the node model
proposed in \cite{tampere11}.
Here we extend the result of \cite{gomes08}
by analyzing the equilibrium structure of the freeway traffic model
with a single mainline, similarly to \cite{gomes08}, but for a different
node model.
Then, we discuss when it makes sense to split the mainline into two
lanes, the general purpose and the toll lanes, where the toll
lane is supposed to be less congested than the general purpose one.
It is important to note that activating a toll lane is not always
beneficial for the freeway performance.
Obviously, when the traffic density is low, a toll lane does not provide any
advantage in terms of travel time and, thus, should be free.
On the other hand, when the freeway is too congested, maintaining a toll lane
in free flow may be harmful, as it will lead to large congestion spillback
and degrade the throughput.
We propose a supply side control algorithm that computes split ratios dividing
the incoming flows between the general purpose and the toll lanes.
Its goal is to keep the toll lane in free flow as long as possible, then,
in the case of continuing excessive demand, allow the toll lane to congest
up to the point when the toll lane becomes a general purpose lane
in terms of its performance.
When the congestion recedes, the control algorithm ensures that the toll
lane frees up first.
The objective is to maintain the same throughput as in the freeway with all
lanes being general purpose.
The idea of this control algorithm is inspired by the HERO coordinated
ramp metering \cite{papamichail10}.
While in the case of ramp metering ramps are used as storage for the excess
demand, the general purpose lane serves as the storage for the split ratio 
control.
All proofs are given in the context of the proposed traffic model.
Once the desired flow splits are computed by the split ratio controller,
we achieve them by setting the toll rate --- that is already a demand
management.
The common way of setting the price is to rely on the Value of Time (VoT)
distribution \cite{brownstone05}, which is assumed to be known.
Relying on the VoT distribution, though, makes our control indirect, as we can only
estimate how much travelers are ready to pay.
The alternative solution is to use the truthful auction, which we propose
as a means of ensuring the desired split ratios.

The rest of the paper is organized as follows.
Section \ref{sec_model} introduces the traffic model, describes its
equilibrium structure and splitting the mainline into
the general purpose and the toll lanes.
Section \ref{sec_control} presents the split ratio control algorithm and
provides examples.
Section \ref{sec_tolls} is dedicated to pricing mechanisms, discussing the
estimation of VoT distribution and proposing the auction algorithm as an alternative.
The simulation of Interstate 680 South in Contra Costa County in California
is discussed in Section \ref{sec_application}.
Ultimately, Section \ref{sec_conclusion} concludes the paper.

\section{Traffic Model}\label{sec_model}
A road network is modeled by a directed graph~$G = (E, V)$,
where $E$ is the set of edges (links), $V$ is the set of vertices (nodes).
A link $e \in E$ is an ordered pair of nodes:
$e = (u, v)$, $u, v \in V$.
$\operatorname{In}(v)$ denotes the set of incoming links
of node~$v$,
$\operatorname{Out}(v)$ denotes the set of outgoing links of node~$v$.
Links without predecessors (whose begin nodes have no inputs)
are called \emph{entrances}.
Links without successors (whose end nodes have no outputs)
are called \emph{exits}.
Links other than entrances and exits are referred to as \emph{inner links}.
By $E^\mathrm{in}$ and $E^\mathrm{out}$ we denote
the set of entrances and the set of exits, and
assume that $E^\mathrm{in} \cap E^\mathrm{out} = \varnothing$.
Nodes correspond to road intersections, merges, diverges,
or subdivide longer links into smaller ones.

Every inner link or exit~$e$ has the following characteristics,
also known as a fundamental diagram:
\begin{center}
\begin{tabular}{ll}
$N_e$ & maximum number of vehicles in the link, \\
$F_e$ & capacity (in vehicles per time step), \\
$v_e$ & free flow speed, \\
$w_e$ & congestion wave speed.
\end{tabular}
\end{center}
For entrances only capacity and free flow speed need to be defined.
The free flow speed and the congestion speed are measured in link per time step,
$0 < v_e, w_e < 1$.

Link~$e$ at time~$t$ contains $n_e(t)$ vehicles.
Define the \emph{outflow demand}, or the required outflow,
of link~$e \in E$ at time~$t$ as
$f_e^d(t) = \min \{v_e n_e(t), F_e\}$.
Define the \emph{inflow supply}, or the maximum inflow,
of link~$e \in E \setminus E^\mathrm{in}$ at time~$t$ as
$f_e^s(t) = \min \{w_e (N_e - n_e(t)), F_e\}$.
The number of vehicles in link~$e = (u, v)$ at time~$t + 1$
is
\begin{equation} \label{eq:continuity}
n_e(t + 1) = n_e(t) + f_e^\mathrm{in}(t) - f_e^\mathrm{out}(t),
\end{equation}
where $f_e^\mathrm{in}(t)$ is the total incoming flow for link~$e$ at time~$t$:
\begin{equation}
f_e^\mathrm{in}(t) = \sum_{e_1 \in \operatorname{In}(u)} f_{e_1, e}(t),\qquad
e \in E \setminus E^\mathrm{in},
\end{equation}
the incoming flow $f_e^\mathrm{in}(t)$ for every entrance $e \in E^\mathrm{in}$
is given;
$f_e^\mathrm{out}(t)$ is the total outgoing flow for link~$e$ at time~$t$:
\begin{equation}
f_e^\mathrm{out}(t) = \sum_{e_2 \in \operatorname{Out}(v)} f_{e, e_2}(t),\qquad
e \in E \setminus E^\mathrm{out},
\end{equation}
the outgoing flow for exits $e \in E^\mathrm{out}$ equals the outflow demand:
$f_e^\mathrm{out}(t) = f_e^d(t)$.

Flows $f_{e_1, e_2}(t)$ between adjacent links $e_1$, $e_2$,
are determined by a node model
that should satisfy the following conditions:
(1)~all~flows between adjacent~nodes $f_{e_1, e_2}(t)$ are non-negative;
(2)~incoming~flows don't exceed inflow~supplies, except for entrances,
$f_e^\mathrm{in}(t) \le f_e^s(t)$, $e \in E \setminus E^\mathrm{in}$; and
(3)~outgoing~flows don't exceed outflow~demands,
$f_e^\mathrm{out}(t) \le f_e^d(t)$, $e \in E$.

Conservation law~\eqref{eq:continuity} and the flow constraints ensure that if
the number of vehicles at time~$t$ in each link~$n_e(t)$
is positive and less than the maximum number of vehicles~$N_e$
(that is, $n_e(t) \ge 0$ for all~$e \in E$
and $n_e(t) \le N_e$ for all $e \in E \setminus E^\mathrm{in}$),
then the same holds for the next time step $t + 1$
(namely, $n_e(t + 1) \ge 0$ for $e \in E$, and
$n_e(t + 1) \le N_e$ for $e \in E \setminus E^\mathrm{in}$).
Moreover, if an exit link~$e \in E^\mathrm{out}$,
is in a \emph{free flow state} at time~$t$, $v_e n_e(t) \le F_e$,
then it will stay in a free flow state at time~$t + 1$:
$v_e n_e(t + 1) \le F_e$.

\subsection{Node Model} \label{subsec:node-model}
We use the node model proposed in \cite{tampere11}.
Consider node~$v$ with $m$ incoming and $n$ outgoing links,
$m, n > 0$.
Let $f_i^d(t)$ be an outflow demand of the $i$th incoming link at time~$t$,
and $f_j^s(t)$ be an inflow supply of the $j$th outgoing link at time~$t$
as shown in Figure~\ref{fig:node}.

\begin{figure}[htb]
\centering
\includegraphics{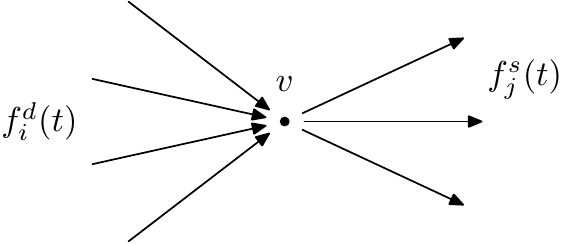}
\caption{Node of a road network graph.}
\label{fig:node}
\end{figure}

Let $f_{ij}(t)$ be the flow from the $i$th incoming to the $j$th outgoing link.
A node model determines flows $f_{ij}(t)$
from demands~$f_i^d(t)$, supplies~$f_j^s(t)$,
a \emph{split ratio matrix} $B_v(t) \in R_{m \times n}$
and \emph{priorities} of the incoming links $p_i \geq 0$.
\cite{tampere11} proposed that the priorities~$p_i$
should be proportional to capacities~$F_i$ of the incoming links.
Alternatively, the priorities~$p_i$
can be proportional to demands~$f_i^d(t)$,
as suggested in~\cite{jin03} and implemented in \cite{kurvar12}.

Elements of the split ratio matrix
$B_v(t) = \{ \beta_{ij}(t) \}_{i = 1, \dots, m}^{j = 1, \dots, n}$
are nonnegative, and
$\sum_{j = 1}^n \beta_{ij}(t) = 1$.
The split ratio matrix imposes the First-In-First-Out (FIFO) constraint
on the flows~$f_{ij}(t)$:
\begin{equation}
\frac{f_{i j_1}(t)}{\beta_{i j_1}(t)} =
\frac{f_{i j_2}(t)}{\beta_{i j_2}(t)}.
\end{equation}

Priority coefficients of the incoming links, $p_i$,
affect flows~$f_{ij}(t)$, if a supply of some outgoing link
is too small to accommodate a demand from the incoming links.
This will be clarified later.

We now describe the algorithm for finding flows~$f_{ij}$.
To simplify the notation, the time step~$t$ is omitted.

\paragraph{The Algorithm}
\begin{enumerate}
\item Compute oriented demands
$f_{ij}^d = f_i^d \beta_{ij}$
and directed priorities
$p_{ij} = p_i \beta_{ij}$,
$i = 1, \dots, m$; $j = 1, \dots, n$.

\item Define sets $J(1)$, $V_j(1)$, and supply residuals~$\tilde f_j^s(1)$,
$j \in J(1)$:
\begin{align*}
J(1) &= \{j\colon \textstyle\sum_{i = 1}^m f_{ij}^d > 0 \}; \\
V_j(1) &= \{i\colon f_{ij}^d > 0\}; \\
\tilde f_j^s(1) &= f_j^s.
\end{align*}
Clearly, $f_{ij} = 0$ if $j \notin J(1)$.

Sets $V_j(k)$ defined at each step~$k$ are the sets of incoming links~$i$
such that flows~$f_{ij}$ are still to be defined.
$J(k)$ is a set of outgoing links~$j$ such that $V_j(k) \ne \varnothing$.
The supply residual~$\tilde f_j^s(k)$ is the inflow supply of link~$j$
to be distributed among incoming links in $V_j(k)$.

\item Set $k \leftarrow 1$.

\item \label{item:loop-start}
If $J(k) = \varnothing$, stop.

\item \label{item:a-reduction}
For each $j \in J(k)$ compute a reduction factor
\begin{equation*}
a_j(k) = \frac{\tilde f_j^s(k)}{\sum_{i \in V_j(k)} p_{ij}}.
\end{equation*}

Determine the minimum of~$\{a_j(k)\}_{j \in J(k)}$:
$\hat a(k) = a_{\hat\jmath(k)}(k) = \min_{j \in J(k)} a_j(k)$.

\item Define
$U(k) = \{i \in V_{\hat\jmath(k)}(k)\colon f_i^d \le \hat a(k) p_i \}$.
It's a set of incoming links such that their outflow demand does not exceed
the rightful share in the inflow supply of the remaining outgoing links.
\begin{enumerate}
\item \label{item:demand-constrained}
If $U(k) \ne \varnothing$, then
for all~$i \in U(k)$, $j = 1, \dots, n$, set
$f_{ij} = f_{ij}^d$,
and compute
\begin{alignat*}{2}
V_j(k + 1) &= V_j(k) \setminus U(k), &\qquad& j \in J(k); \\
J(k + 1) &= \{j \in J(k)\colon V_j(k + 1) \ne \varnothing\}; \\
\tilde f_j^s(k + 1) &= \tilde f_j^s(k) - \textstyle\sum_{i \in U(k)} f_{ij}^d,
&& j \in J(k + 1).
\end{alignat*}

\item \label{item:supply-constrained}
If $U(k) = \varnothing$, then
for all $i \in V_{\hat\jmath(k)}(k)$, $j = 1, \dots, n$,
set
$f_{ij} = \hat a(k) p_{ij}$,
and compute
\begin{alignat*}{2}
V_j(k + 1) &= V_j(k) \setminus V_{\hat\jmath(k)}(k), &\qquad& j \in J(k); \\
J(k + 1) &= \{j \in J(k)\colon V_j(k + 1) \ne \varnothing\}; \\
\tilde f_j^s(k + 1) &= \tilde f_j^s(k)
- \textstyle\sum_{i \in V_{\hat\jmath(k)}(k)} \hat a(k) p_{ij},
&& j \in J(k + 1).
\end{alignat*}
Note that $\hat\jmath(k) \notin J(k + 1)$.
\end{enumerate}

\item Set $k \leftarrow k + 1$, and go to step~\ref{item:loop-start}.
\end{enumerate}

The first 3 steps are initialization steps.
As explained in~\cite{tampere11},
in each iteration we define the reduction factors~$a_j$
(step~\ref{item:a-reduction}),
which
the incoming flows
competing for the remaining supply of the $j$th outgoing link
should be set proportional to,
if they were not constrained by the demand of the corresponding incoming link,
so $f_{ij} = a_{j} p_{ij}$.
Then, a minimum of $a_j$, denoted by $\hat a$, whose index is $\hat\jmath$,
is selected
and imposed on all incoming links from $V_{\hat\jmath}$.
If some links are found to be demand-constrained
(case~\ref{item:demand-constrained}),
their outgoing flow is set equal to demand,
and those demand-constrained links~$U$ are removed from every~$V_j$.
Otherwise (case~\ref{item:supply-constrained})
the flows from $V_{\hat\jmath}$ are supply-constrained with factor~$\hat a$,
and all links from $V_{\hat\jmath}$ are removed from every~$V_j$.
The algorithm stops when all flows have been defined.
Since at least one incoming link is removed from all~$V_j$
in each iteration,
the algorithm finishes after at most $m$~iterations.

\subsection{Freeway Model}
The model of a freeway without toll lanes described below is
a combination of the Cell Transmission Model for networks, similar to \cite{daganzo95},
and the node model proposed in~\cite{tampere11} and explained above.

A freeway network consists of mainline links,
entrance links (mostly on-ramps), and exit links (mostly off-ramps).
Links are connected to each other by nodes.
Nodes correspond to junctions, merge and diverge points, and are also used
to split long links into shorter ones.
When a freeway network is built for simulation, nodes are typically placed so as to
keep link lengths in the range between $.1$ and $.2$ miles.

The characteristics of the $i$th mainline link are:
its capacity $F_i$, the maximum number of vehicles it can contain $N_i$,
a free flow speed~$v_i$ and a congestion speed~$w_i$.
The capacity is measured in vehicles per time step,
while both speeds are measured in link per time step, $0 < v_i, w_i < 1$.
The freeway contains $K$ links, numbered from~1 to~$K$,
the entrance link numbered~0, and the exit link numbered~$K + 1$.
Additionally,
there may be an on-ramp with capacity~$R_i$ and free flow speed~$v_i^r < 1$
attached to the beginning of mainline link~$i$,
and an off-ramp with capacity~$S_i$ at the end of the $i$th link,
$i = 1, \dots, K$.
If there is no on-ramp at the beginning of the $i$th link, then $R_i = 0$.

We now clarify how the model parameters, such as capacities and speeds,
are \emph{normalized},
that is, how
the speeds measured in miles per hour
are converted into the speeds measured in link per time step,
and how the capacities measured in vehicles per hour per lane
are converted into the capacities measured in vehicles per time step.
Suppose $L_i$ is the length of mainline link~$i$ measured in miles,
$\tau$ is the time step measured in hours,
$V_i$ is the free flow speed in link~$i$ measured in miles per hour,
$W_i$ is the congestion speed in link~$i$ measured in miles per hour,
$C_i$ is the capacity of link~$i$ measured in vehicles per hour per lane,
$k_i$ is the number of lanes in link~$i$.
Then $v_i = V_i \tau / L_i$, $w_i = W_i \tau / L_i$,
$F_i = C_i k_i \tau$.
The speeds $v^r_i$ and the capacities $R_i$, $S_i$ are normalized in the same way.
Hence the aforementioned inequalities $v_i, w_i < 1$
are equivalent to $\max\{ V_i, W_i \} \, \tau / L_i < 1$,
which is the Courant~--- Friedrichs~--- Lewy condition (see~\cite{cfl28}).
For example, if the minimum link length~$L_i$ is $0.2$~miles,
the maximum free flow speed~$v_i$ is 60~mph,
and the maximum congestion speed $w_i$ is 50~mph, then
the maximum time step~$\tau$ is $0.2 / 60$~hours, which is~12~seconds.

\begin{figure}[htb]
\centering
\includegraphics{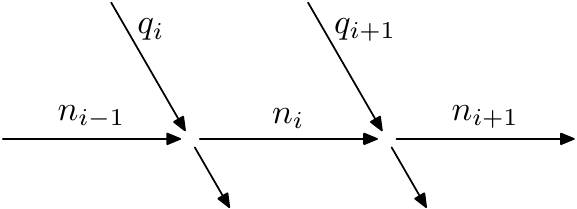}
\caption{Freeway model.}
\label{fig:freeway}
\end{figure}

Denote $n_i(t)$ the number of vehicles in link~$i$ at time~$t$.
We assume that if the $i$th link is uncongested,
that~is, $v_i n_i(t) \le F_i$,
then the number of vehicles it contains
doesn't influence the incoming flow for this link at time~$t$,
that is, $w_i (N_i - n_i(t)) \ge F_i$.
This assumption is equivalent to the following inequality:
\begin{equation} \label{eq:freeflow-noconstraint}
\frac{F_i}{v_i} + \frac{F_i}{w_i} \le N_i.
\end{equation}

Denote $d_i(t)$ the demand, or desired flow, at the $i$th on-ramp;
$r_i(t)$ --- the actual flow coming from on-ramp $i$ to link $i$;
and $q_i(t)$ --- the number of vehicles queued at the $i$th on-ramp,
at time $t$.
Both flows, $d_i(t)$ and $r_i(t)$ are measured in vehicles per time step.
Clearly,
\begin{equation} \label{eq:queue-dynamics}
q_i(t + 1) = q_i(t) + d_i(t) - r_i(t).
\end{equation}
Denote $f_i(t)$ the flow from link~$i$ to link~$(i + 1)$;
and $s_i(t)$ --- the flow from link~$i$ to the corresponding off-ramp.
The number of vehicles in
mainline links evolves as:
\begin{equation} \label{eq:mainline-dynamics}
n_i(t + 1) = n_i(t) + f_{i - 1}(t) + r_i(t) - f_i(t) - s_i(t).
\end{equation}
As for the entrance link~0 and the exit link~$K + 1$,
the same equality holds, except $r_i(t) = s_i(t) = 0$, $i = 0, K + 1$.

Since every exit remains in a free flow state once it's there,
there is no need to take the 
state of exits
into consideration:
assuming that all exits, including link~$K + 1$,
are initially in a free flow state,
and that inequality~\eqref{eq:freeflow-noconstraint} holds,
flows from mainline links to the exits are only constrained by the
exit capacities.

Off-ramp flow~$s_i(t)$ is proportional to the
flow to the downstream link~$f_i(t)$:
there exist split ratios $\beta_i^f$, $\beta_i^s$, such that
$\beta_i^f > 0$, $\beta_i^s \ge 0$, $\beta_i^f + \beta_i^s = 1$,
\begin{equation}
\frac{f_i(t)}{\beta_i^f} = \frac{s_i(t)}{\beta_i^s}.
\end{equation}
If link~$i$ has no off-ramp, let $\beta_i^s = 0$, $\beta_i^f = 1$.

Denote $F_i^s = F_i$, $i = 1, \dots, K + 1$,
$F_0^d = F_0$,
\begin{equation} \label{eq:outflow-capacity}
F_i^d = \begin{cases}
F_i, & \beta_i^s = 0, \\
\beta_i^f \min \{ F_i, S_i / \beta_i^s \}, & \beta_i^s > 0,
\end{cases}
\qquad
i = 1, \dots, K + 1.
\end{equation}
$F_i^s$ is the ``inflow capacity'' of the $i$th mainline link,
that is, the maximum incoming flow
if the link has enough space for them.
$F_i^d$ is the ``outflow capacity'' of the $i$th mainline link,
that is, the maximum number of vehicles which can move from this link
to link $i + 1$ per time step if link~$i + 1$ has enough space for them.
By introducing the outflow capacity
we ensure that the off-ramp flow~$s_i(t) = (\beta_i^s / \beta_i^f) f_i(t)$
never exceeds the off-ramp capacity~$S_i$
if the downstream flow~$f_i(t)$ does not exceed the outflow capacity~$F_i^d$:
\begin{equation}
s_i(t) = \frac{\beta_i^s}{\beta_i^f} f_i(t) \le
\frac{\beta_i^s}{\beta_i^f} F_i^d \le S_i.
\end{equation}

The incoming flows $d(t)$ and $f_{-1}(t)$ are given,
other flows are determined by the node model
(see Figure~\ref{fig:freeway-node}).
For each node compute outflow demands
for the on-ramp~$r_i^d(t)$ (if present)
and the incoming (upstream) mainline link~$f_{i - 1}^d(t)$,
and inflow supply for the outgoing (downstream) mainline link~$f_i^s(t)$:
\begin{alignat}{2}
f_i^s(t) &= \min \{ w_i (N_i - n_i(t)), F_i^s \}, &\qquad& i = 1, \dots, K + 1, \\
f_i^d(t) &= \min \{ \beta_i^f v_i n_i(t), F_i^d \}, && i = 0, \dots, K + 1, \\
r_i^d(t) &= \min \{ v_i^r q_i(t), R_i \}, && i = 1, \dots, K.
\end{alignat}
Let $r_{K + 1}^d(t) = 0$.
Note that $f_{K + 1}^s(t) = F_{K + 1}^s$.

\begin{figure}[htb]
\centering
\includegraphics{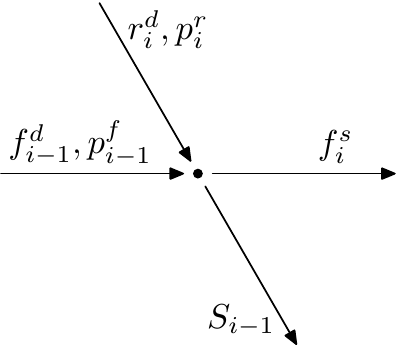}
\caption{Freeway node.}
\label{fig:freeway-node}
\end{figure}

The flow out of exit link~$K + 1$ 
equals the outflow demand,
$f_{K + 1}(t) = f_{K + 1}^d(t)$.
The flow from the last mainline link~$K$ to the exit link~$K + 1$
is a minimum of the outflow demand of the upstream link
and the inflow supply of the downstream link:
$f_K(t) = \min \{ f_K^d(t), f_{K + 1}^s(t) \}$.

Priorities of mainline links $p_i^f$, $i = 0, \dots, K - 1$,
and priorities of on-ramps $p_i^r$, $i = 1, \dots, K$, are given,
Suppose the priorities are normalized:
$p_{i - 1}^f + p_i^r = 1$.
For each $i = 1, \dots, K$ compute
$f_{i - 1}(t)$ and $r_i(t)$ as follows.
\begin{enumerate}
\item If the downstream mainline link can accomodate
the demand from the on-ramp and the upstream link,
$f_{i - 1}^d(t) + r_i^d(t) \le f_i^s(t)$, then
the actual flows equal the demands:
$f_{i - 1}(t) = f_{i - 1}^d(t)$,
$r_i(t) = r_i^d(t)$.
\item Otherwise, if the outflow demand from upstream mainline link
does not exceed its rightful share in the inflow supply of the downstream link,
$f_{i - 1}^d(t) \le p_{i - 1}^f f_i^s(t)$, then
the flow from the upstream link equals the demand,
$f_{i - 1}(t) = f_{i - 1}^d(t)$,
and the flow from the on-ramp takes up the residual of the supply:
$r_i(t) = f_i^s(t) - f_{i - 1}^d(t)$.
\item Otherwise, if the outflow demand from the on-ramp
doesn't exceed its rightful share in the supply,
$r_i^d(t) \le p_i^r f_i^s(t)$, then, analogously,
$r_i(t) = r_i^d(t)$,
$f_{i - 1}(t) = f_i^s(t) - r_i^d(t)$.
\item Otherwise
$f_{i - 1}^d(t) > p_{i - 1}^f f_i^s(t)$
and $r_i^d(t) > p_i^r f_i^s(t)$.
In this case the flows are proportional to the priorities:
$r_i(t) = p_i^r f_i^s(t)$,
$f_{i - 1}(t) = p_{i - 1}^f f_i^s(t)$.
\end{enumerate}

These rules are derived from the general node model presented
in subsection~\ref{subsec:node-model}.
Note that since $f_{i - 1}(t) \le f_{i - 1}^d(t) \le F_{i - 1}^d$,
the off-ramp flow $s_{i - 1}(t) = (\beta_{i - 1}^s / \beta_{i - 1}^f) f_{i - 1}(t)$ does not exceed the off-ramp capacity~$S_{i - 1}$
for the reasons mentioned after equation~\eqref{eq:outflow-capacity}.
Thus, both supply constraints~---
from the downstream link and from the off-ramp~---
are satisfied.

\subsection{Equilibria of the Freeway Model}
Suppose the incoming flows~$f_{-1}$ and~$d$ are constant,
as well as priorities~$p_i^f$, $p_i^r$
and split ratios $\beta_i^f$, $\beta_i^s$.
We say that a triple~$(n, f, r)$
is an \emph{equilibrium},
if
$v_{K + 1} n_{K + 1} \le F_{K + 1}$ (that is, the exit link $(K + 1)$ is uncongested)
and there exist queue lengths $n_0$, $q_i$, $i = 1, \dots, K$,
such that if $n_i(t) = n_i$, $i = 1, \dots, K$,
$n_0(t) = n_0$, and $q(t) = q$,
then $f(t + \Delta t) = f$, $n_i(t + \Delta t) = n_i$, $i = 1, \dots, K$,
and $r(t + \Delta t) = r$ for all $\Delta t = 0, 1, 2, \dots$\,.
Note that queue lengths $n_0$, $q$ are not included in the equilibrium.
Queue lengths either stay constant
or grow at a rate of $(f_{-1} - f_0)$ or $(d_i - r_i)$ vehicles per time step.

Define maximum flows $\bar f_0 = \min \{ f_{-1}, F_0 \}$,
$\bar r_i = \min \{ d_i, R_i \}$,
\begin{equation}
\bar f_i = \min \{ \beta_i^f (\bar f_{i - 1} + \bar r_i), F_i^d \},
\qquad
i = 1, \dots, K + 1.
\end{equation}
It can be shown that inequalities
$r_i \le \bar r_i$, $i = 1, \dots, K$, and
$f_i \le \bar f_i$, $i = 0, \dots, K + 1$,
hold for every equilibrium~$(n, f, r)$.

Theorems \ref{th:equilibrium-flows}, \ref{th:equilibrium-densities}
fully characterize
the set of all equilibria
corresponding to the incoming flows $f_{-1}$, $d$.

\begin{theo} \label{th:equilibrium-flows}
Equilibrium flows $r$ and $f$ are uniquely defined.
Namely, $f_{K + 1} = \bar f_{K + 1}$,
and flows $f_{i - 1}$,~$r_i$
are uniquely determined by~$f_i$, $i = K + 1, \dots, 1$,
as follows.
\begin{enumerate}
\item If $\bar f_{i - 1} \le p_{i - 1}^f f_i / \beta_i^f$, then
$f_{i - 1} = \bar f_{i - 1}$, $r_i = f_i / \beta_i^f - \bar f_{i - 1}$.
\item If $\bar r_i \le p_i^r f_i / \beta_i^f$, then
$r_i = \bar r_i$, $f_{i - 1} = f_i / \beta_i^f - \bar r_i$.
\item Otherwise $\bar f_{i - 1} > p_{i - 1}^f f_i / \beta_i^f$,
$\bar r_i > p_i^r f_i / \beta_i^f$; in this case
$f_{i - 1} = p_{i - 1}^f f_i / \beta_i^f$,
$r_i = p_i^r f_i / \beta_i^f$.
\end{enumerate}
\end{theo}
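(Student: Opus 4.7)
In equilibrium each mainline density $n_i$ is constant for $i = 1, \dots, K+1$, so the update~\eqref{eq:mainline-dynamics}, together with $s_i = (\beta_i^s/\beta_i^f)\,f_i$, collapses to the scalar conservation identity $f_{i-1} + r_i = f_i/\beta_i^f$ at each node $i$ (with $r_{K+1}=0$ and $\beta_{K+1}^f=1$, so that $f_K = f_{K+1}$). My plan is to couple this single identity at each node with the freeway node rule and then march backward from the exit, using conservation to fix the sum $f_{i-1}+r_i$ and the node rule to fix how it splits.

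The base case is $f_{K+1} = \bar f_{K+1}$. The bound $f_{K+1} \le \bar f_{K+1}$ is recorded before the theorem; for the matching lower bound I would use the merge rule at node $K+1$: since the exit is uncongested, $f_{K+1}^s = F_{K+1}$, and $f_{K+1} = f_K = \min\{f_K^d, F_{K+1}\}$. In equilibrium the upstream density $n_K$ is free to build up (queues at the mainline entrance and at on-ramps absorb any imbalance), so $f_K^d$ saturates at $\bar f_K$ whenever the exit is not the binding constraint; either way $f_{K+1} = \min\{\bar f_K, F_{K+1}\} = \bar f_{K+1}$. For the inductive step, conservation already fixes $f_{i-1}+r_i = f_i/\beta_i^f$, while the freeway node rule partitions this sum according to the demands $f_{i-1}^d$, $r_i^d$ and priority shares $p_{i-1}^f, p_i^r$. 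The same build-up argument shows that in equilibrium $f_{i-1}^d$ and $r_i^d$ effectively saturate at $\bar f_{i-1}$ and $\bar r_i$; substituting into the four-case freeway node rule yields exactly the three cases of the theorem, the ``no-congestion'' case of the node rule being absorbed into case~1 or case~2 along the boundary $f_i/\beta_i^f = \bar f_{i-1}+\bar r_i$.

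I expect the main obstacle to be the saturation argument used in both the base step and the inductive step, namely that equilibrium actually drives the upstream demands to their forward-propagated maxima. The recursion defining $\bar f_i$ mixes the capacity ceiling $F_i^d$ with the propagated inflow $\beta_i^f(\bar f_{i-1}+\bar r_i)$, and I must rule out a self-consistent equilibrium with strictly smaller upstream densities and queues that never fully commit to $\bar f_{i-1}$ or $\bar r_i$. Once this saturation is verified, the rest of the proof is clean case-matching between the theorem's three cases and the freeway node rule, with uniqueness following from the fact that conservation plus the node rule leave no free parameter at each step.
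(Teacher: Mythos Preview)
Your plan is essentially the paper's own argument: the paper states that the theorem ``follows from Lemma~\ref{lemma:f-demand} and the node model,'' where Lemma~\ref{lemma:f-demand} is precisely the saturation step you flag as the main obstacle. The only sharpening you need is in how you phrase that step: the lemma does \emph{not} say that the demands $f_{i-1}^d$, $r_i^d$ equal $\bar f_{i-1}$, $\bar r_i$ (they need not, e.g.\ a congested upstream link can have $f_{i-1}^d = F_{i-1}^d > \bar f_{i-1}$); the correct and sufficient statement is the implication ``$f_i < \bar f_i \Rightarrow f_i < f_i^d$'' (with the obvious on-ramp analogue), which in contrapositive says that a demand-constrained equilibrium flow already equals its maximum $\bar f$. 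With that form in hand, your backward induction and case-matching against the four node-rule cases go through exactly as you outline, and the base case $f_{K+1} = \bar f_{K+1}$ is immediate from $f_{K+1} = f_{K+1}^d$ together with the lemma at $i = K+1$.
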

Note that this rule is similar to that in the node model.
Again, the off-ramp capacity constraints are not violated
due to the reasons mentioned after equation~\eqref{eq:outflow-capacity}
and the fact that $f \le \bar f \le F^d$.

Theorem~\ref{th:equilibrium-flows} follows from lemma~\ref{lemma:f-demand} and the node model.

\begin{lemma} \label{lemma:f-demand}
If $f_i < \bar f_i$, then $f_i < f_i^d$.
\end{lemma}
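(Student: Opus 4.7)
The plan is to prove the contrapositive: since $f_i \le f_i^d$ and $f_i \le \bar f_i$ always hold, it suffices to show that $f_i = f_i^d$ forces $f_i = \bar f_i$. Because $\bar f_i$ is defined recursively in terms of $\bar f_{i-1}$ and $\bar r_i$, I would proceed by induction on $i$ from $0$ upward.

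Starting from $f_i = f_i^d$, split on which term attains the minimum in $f_i^d = \min\{\beta_i^f v_i n_i,\, F_i^d\}$. If $f_i^d = F_i^d$, then $\bar f_i \le F_i^d = f_i$ gives $f_i = \bar f_i$ immediately. Otherwise $f_i = \beta_i^f v_i n_i$ lies strictly below $F_i^d$, and since \eqref{eq:outflow-capacity} implies $F_i^d \le \beta_i^f F_i$, we obtain $v_i n_i < F_i$; the free-flow assumption \eqref{eq:freeflow-noconstraint} then forces $w_i(N_i - n_i) > F_i$, so link $i$'s supply satisfies $f_i^s = F_i$.

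Equilibrium conservation on link $i$ together with the FIFO off-ramp constraint gives $f_{i-1} + r_i = f_i/\beta_i^f = v_i n_i < F_i = f_i^s$. Examining the four branches of the upstream node model, each of the three supply-limited branches produces a summed flow of exactly $f_i^s$, which contradicts the strict inequality; so we must be in the demand-satisfied branch, giving $f_{i-1} = f_{i-1}^d$ and $r_i = r_i^d$. A short case split on the on-ramp queue $q_i$ at equilibrium (stationary, so $r_i = d_i \le R_i$; versus unboundedly growing, so eventually $r_i^d = R_i \le d_i$) shows that $r_i = r_i^d$ forces $r_i = \bar r_i$, and the inductive hypothesis at $i - 1$ applied to $f_{i-1} = f_{i-1}^d$ yields $f_{i-1} = \bar f_{i-1}$. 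Combining, $f_i = \beta_i^f(\bar f_{i-1} + \bar r_i) \ge \bar f_i$, as required.

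The base case reduces to verifying the statement for the entrance link, which is immediate once one notes that whenever $f_0 < \bar f_0 = \min\{f_{-1}, F_0\}$ the entrance queue $n_0$ grows without bound, so $v_0 n_0$ eventually saturates $f_0^d$ at $F_0$ and $f_0 < F_0 = f_0^d$. The subtlest point of the argument, I expect, is the identification $r_i = r_i^d \Rightarrow r_i = \bar r_i$: it is not an instance of the lemma itself and instead rests on the fact that only two queue regimes are compatible with a stationary $r_i$ at equilibrium.
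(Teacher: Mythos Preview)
The paper omits the proof of this lemma entirely, calling it ``purely technical,'' so there is no argument to compare against. Your inductive approach via the contrapositive is sound and is indeed the natural route: splitting on which term realizes $f_i^d$, using \eqref{eq:freeflow-noconstraint} to force $f_i^s=F_i$ in the non-saturated branch, and then reading off from the merge rule that strict inequality $f_{i-1}+r_i<f_i^s$ pins us in the demand-satisfied case.

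Two small remarks. First, your strict inequality $w_i(N_i-n_i)>F_i$ does follow from the strict $v_in_i<F_i$ together with \eqref{eq:freeflow-noconstraint}, but all you actually need is $f_i^s=F_i$, which already follows from the weak inequality. Second, the base case at $i=0$ (and the companion claim $r_i=r_i^d\Rightarrow r_i=\bar r_i$) is the one place where the argument touches the queue variables $n_0,q_i$, which are not part of the equilibrium triple and may grow. Your ``eventually'' is exactly the right idea, but it is worth stating the mechanism cleanly: since $f_0$ (resp.\ $r_i$) is constant along the equilibrium trajectory while $f_0=f_0^d(t)$ (resp.\ $r_i=r_i^d(t)$) at every step, the demand itself must be constant; combined with a strictly growing queue this forces the demand to sit at the capacity $F_0$ (resp.\ $R_i$) from the outset, not merely eventually. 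With that phrasing the argument is complete and matches what the paper presumably had in mind.
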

Lemma~\ref{lemma:f-demand} is purely technical, the proof is omitted.

Once the equilibrium flows $r$ and $f$ are known,
the set~$E$ of equilibrium density vectors~$n$
is defined as follows.
Let $I = \{ i\colon 1 \le i \le K,\ f_i = F_i^d \} \cup
\{ i\colon 1 \le i \le K,\ f_i + r_{i + 1} = F_{i + 1}^s \}$.
The set~$I$ is in fact a set of bottlenecks, since for all $i \notin I$
$f_i < F_i^d$ and $f_i < F_{i + 1}^s - r_{i + 1}$.
Suppose $I = \{ i_1, \dots, i_M \}$,
$i_1 < i_2 < \dots < i_M$, $M = |I|$.
Assume $i_0 = 0$.
Let
\begin{align}
I_m &= \{ i\colon i_{m - 1} < i \le i_m \},\qquad m = 1, \dots, M, \\
I_{M + 1} &= \{ i\colon i_M < i \le K + 1 \}.
\end{align}
If $I = \varnothing$, then
$I_{M + 1} = I_1 = \{1, \dots, K + 1\}$.
The set~$I_m$, $m \in \{2, \dots, M\}$, contains all links between
the bottleneck~$i_{m - 1}$
and the downstream bottleneck~$i_m$,
including the downstream bottleneck.
The set $I_{M + 1}$ contains links downstream of the~last bottleneck~$i_M$,
the set $I_1$ contains links upstream of the first bottleneck~$i_1$.

Let
\begin{equation}
n_i^u = n_i^u(r, f) = \frac{f_i}{\beta_i^f v_i},\qquad
n_i^c = n_i^c(r, f) = N_i - \frac{r_i + f_{i - 1}}{w_i},\qquad
i = 1, \dots, K + 1.
\end{equation}
Letters $u$ and $c$ here mean \emph{uncongested} and \emph{congested}.
We claim that $n^u \le n^c$ whenever $f_i = \beta_i^f (r_i + f_{i - 1}) \le F_i^d$.
Indeed, this follows from inequality~\eqref{eq:freeflow-noconstraint}:
\begin{equation}
n_i^u = \frac{f_i}{\beta_i^f v_i} \le \frac{F_i^d}{\beta_i^f v_i} \le
\frac{F_i}{v_i} \le N_i - \frac{F_i}{w_i}
\le N_i - \frac{F_i^d}{\beta_i^f w_i} \le
N_i - \frac{r_i + f_{i - 1}}{w_i} = n_i^c.
\end{equation}

Define sets
\begin{align}
U &= \{ i\colon 1 \le i \le K - 1,\ f_i < F_i^d,\ f_i / p_i^f < r_{i + 1} / p_{i + 1}^r \}, \\
C &= \{ i\colon 1 \le i \le K,\ r_i < \bar r_i,\ f_{i - 1} + r_i < F_i^s \}
\cup \{1, \text{ if } f_0 < \bar f_0,\ f_0 + r_1 < F_1^s \}.
\end{align}
$U$ is the set of uncongested links: $n_i = n_i^u$ for~$i \in U$;
$C$ is the set of congested links: $n_i = n_i^c$ for~$i \in C$.
These conditions can be obtained directly from the node model.
For~$m = 1, \dots, M$ define indices
\begin{equation}
i_m^u = \begin{cases}
i_{m - 1}, & I_m \cap U = \varnothing, \\
\max (I_m \cap U), & I_m \cap U \ne \varnothing,
\end{cases}\qquad
i_m^c = \begin{cases}
i_m + 1, & I_m \cap C = \varnothing, \\
\min (I_m \cap C), & I_m \cap C \ne \varnothing.
\end{cases}
\end{equation}
It can be shown that $i_m^u < i_m^c$ for each $m = 1, \dots, M$.

\begin{theo} \label{th:equilibrium-densities}
The set~$E$ of equilibrium vectors~$n$ is a direct product of sets
$E_m$ corresponding to sets~$I_m$:
\begin{equation}
E = \bigotimes_{m = 1}^{M + 1} E_m.
\end{equation}
The set~$E_{M + 1}$ consists of a single vector,
\begin{equation}
E_{M + 1} = \{ (n_{i_M + 1}^u, \dots, n_{K + 1}^u) \}.
\end{equation}
The set~$E_m$, $m \in \{1, \dots, M\}$ either consists
of a single vector,
\begin{equation}
E_m = \{ (n_{i_{m - 1} + 1}^u, \dots, n_{i_m^u}^u,
n_{i_m^c}^c, \dots, n_{i_m}^c) \},
\qquad
\text{if }i_m^u = i_m^c - 1,
\end{equation}
or is a union
\begin{equation}
E_m = \bigcup_{h = i_m^u + 1}^{i_m^c - 1} E_m^h,
\end{equation}
where
\begin{equation}
E_m^h = \{ (n_{i_{m - 1} + 1}^u, \dots, n_{h - 1}^u, n_h,
n_{h + 1}^c, \dots, n_{i_m}^c),\ n_h^u \le n_h \le n_h^c \}.
\end{equation}
\end{theo}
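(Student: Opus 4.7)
The plan is to prove necessity (every equilibrium density vector $n$ has the claimed structure) and sufficiency (every $n$ of this form is an equilibrium) via a local-to-global argument using the node model. The local ingredient is that, given the equilibrium flows $r$ and $f$, a density $n_i$ is compatible with link~$i$'s mass balance iff $f_i^d(n_i) \ge f_i$ and $f_i^s(n_i) \ge f_{i-1} + r_i = f_i/\beta_i^f$, which by the piecewise-linear fundamental diagram restricts $n_i$ to $[n_i^u, n_i^c]$. Three regimes matter: $n_i = n_i^u$ (outflow demand exactly $f_i$, inflow supply equal to $F_i^s$), $n_i = n_i^c$ (outflow demand at $F_i^d$, inflow supply exactly $f_{i-1}+r_i$), and intermediate $n_i$ (both quantities strictly slack, up to edge cases at bottlenecks).

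I would first handle the forced densities. For $i \in U$, the defining conditions $f_i < F_i^d$ and $f_i/p_i^f < r_{i+1}/p_{i+1}^r$ match case~2 of the node model at $(i,i+1)$ in Theorem~\ref{th:equilibrium-flows}, forcing $f_i = f_i^d$; combined with $f_i < F_i^d$ this pins $n_i = n_i^u$. For $i \in C$, the defining conditions force the inflow supply at link~$i$ to be tight strictly below $F_i^s$, which only occurs in the congested branch of the supply curve, pinning $n_i = n_i^c$. Two propagation lemmas are then the crux. \textbf{(P1)} If $n_i = n_i^u$ and $i-1 \notin I$, then $n_{i-1} = n_{i-1}^u$: since $n_i = n_i^u$ gives $f_i^s = F_i^s$ and $i-1 \notin I$ gives $f_{i-1}+r_i < F_i^s$, node cases 2, 3, 4 at $(i-1,i)$---each of which forces $f_{i-1}+r_i = f_i^s$---are excluded, leaving case~1 with $f_{i-1} = f_{i-1}^d$; together with $f_{i-1} < F_{i-1}^d$ this yields $n_{i-1} = n_{i-1}^u$. \textbf{(P2)} If $n_h \in (n_h^u, n_h^c)$ strictly, the slacks $f_h^s(n_h) > f_h/\beta_h^f$ and $f_h^d(n_h) > f_h$ force the upstream node into case~1 (giving $n_{h-1} = n_{h-1}^u$ by the P1 argument) and the downstream node into case~3 or~4, so that $f_h + r_{h+1} = f_{h+1}^s$ strictly less than $F_{h+1}^s$, giving $n_{h+1} = n_{h+1}^c$. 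A symmetric forward-propagation of congestion---$n_j = n_j^c$ with $j \notin I$ forces $n_{j+1} = n_{j+1}^c$---follows by the same node-model case analysis.

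These pieces combine to give the interval structure. Within each $I_m$ for $m \le M$, iterating P1 from any $U$-forced link fills indices $i_{m-1}+1,\ldots,i_m^u$ with $n^u$, and iterating the congestion propagation from any $C$-forced link fills $i_m^c,\ldots,i_m$ with $n^c$; at most one transition index $h \in \{i_m^u+1,\ldots,i_m^c-1\}$ admits intermediate density, because applying P2 to a second transition $h_2 > h_1$ would require $n_{h_2}$ to be both intermediate and forced to $n_{h_2}^c$ via forward propagation from $n_{h_1+1} = n_{h_1+1}^c$. For $I_{M+1}$, the exit link is in free flow by assumption, and iterated P1 upstream fixes $n_i = n_i^u$ throughout. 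Sufficiency is a direct verification: for any $n$ in the claimed set, substituting into the node model at every junction reproduces the flows of Theorem~\ref{th:equilibrium-flows}. The main obstacle is precisely the propagation-and-uniqueness step, where the four node-model cases must be threaded carefully against the bottleneck structure to rule out multiple transitions and to treat the boundary junctions where intervals abut bottlenecks (especially the case $h = i_m$ when $I_m \cap C = \varnothing$, where the bottleneck link itself plays the role of the transition).
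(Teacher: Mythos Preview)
The paper does not actually prove this theorem: after the statement it only remarks that the special case $U = C = \varnothing$ is handled in~\cite{gomes08} and offers no argument for the general case. Your outline is therefore strictly more than what the paper supplies, and the strategy you propose---pinning $n_i$ on the forced sets $U$ and $C$ via the node-model case analysis, then propagating free flow upstream (P1) and congestion downstream within each bottleneck-delimited block $I_m$, with at most one transition index~$h$---is exactly the natural extension of the Gomes et~al.\ argument to the present node model with nontrivial priorities. The exclusion of node-model cases 2--4 at a node where $f_{i-1}+r_i < F_i^s = f_i^s$ is the correct mechanism for P1, and your uniqueness-of-transition argument via forward congestion propagation is the right way to rule out two intermediate links in one $I_m$. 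The points you flag as delicate (the boundary junctions where $h$ coincides with $i_m$ or $i_{m-1}+1$, and the decoupling across bottlenecks that justifies the direct-product form) are genuine but routine once P1, P2, and the forward-congestion lemma are in hand; nothing in the paper suggests a different or shorter route.
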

The proof of theorem~\ref{th:equilibrium-densities}
for $U = \varnothing$ and $C = \varnothing$
can be found in~\cite{gomes08}.

Figure~\ref{fig:equilibrium-densities} illustrates theorem~\ref{th:equilibrium-densities}.

\begin{figure}[htb]
\centering
\includegraphics{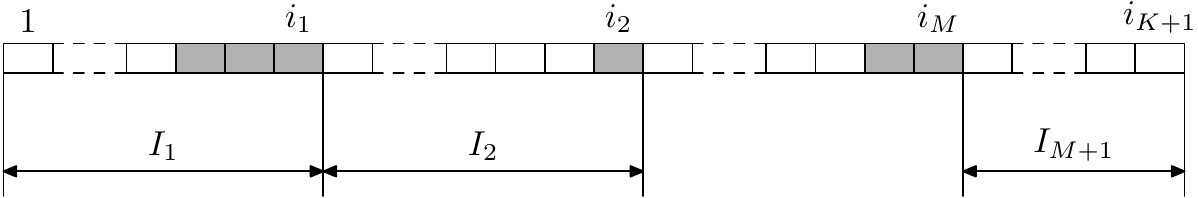}
\caption{Structure of equilibrium vector~$n$.}
\label{fig:equilibrium-densities}
\end{figure}

Note that the equilibrium flows and densities and thus the queue growth
depend on priorities $p^f$, $p^r$.
One extreme case, $p_i^r = 1$, $p_i^f = 0$ for all~$i$,
is fully studied by~\cite{gomes08}.
Another extreme case is $p_i^r = 0$, $p_i^f = 1$ for all~$i$.
In this case all non-bottleneck links are uncongested,
whereas the on-ramp just upstream of the last bottleneck 
is more congested than for any strictly positive on-ramp priorities~$p^r$.

\subsubsection{Feasible and infeasible incoming flows}
Flows $f_i$, $i = 1, \dots, K + 1$,
are uniquely defined by flows from entrances $f_0$ and $r_i$, $i = 1, \dots, K$,
because the equality $f_i = \beta_i^f (f_{i - 1} + r_i)$
holds for all $i = 1, \dots, K$.
Namely,
\begin{equation}
f_i(f_0, r) = f_0 \prod_{k = 1}^i \beta_k^f +
\sum_{j = 1}^i r_j \prod_{k = j}^i \beta_k^f,
\qquad
i = 1, \dots, K + 1.
\end{equation}
The incoming flow $(f_{-1}, d)$ is said to be \emph{feasible},
if $f_i(\bar f_0, \bar r) \le F_i^d$, $i = 1, \dots, K + 1$,
and \emph{infeasible} otherwise.
The incoming flow is said to be \emph{strictly feasible},
if $f_i(\bar f_0, \bar r) < F_i^d$, $i = 1, \dots, K + 1$.

It can be demonstrated that if the incoming flow~$(f_{-1}, d)$
is feasible, then the equilibrium flows equal maximum flows,
$r = \bar r$,
$f = \bar f = f(\bar f_0, \bar r)$,
and consequently $C = \varnothing$.
Moreover, if the incoming flow~$(f_{-1}, d)$
is strictly feasible, then
the set~$E$ of equilibrium density vectors
consists of a single vector, $E = \{ n^u \}$.

\subsection{Model of a Freeway with Toll Lanes}
We model a freeway with toll lanes by splitting
each mainline link into two parallel ones
(see Figure~\ref{fig:highway-managed-lanes}),
the first one (upper index~1)
corresponds to toll lanes,
the second one (upper index~2)
corresponds to the general purpose lanes.
Each of these two links has a separate exit.

\begin{figure}[htb]
\centering
\includegraphics{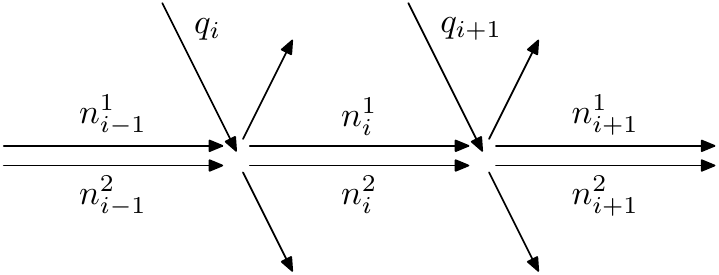}
\caption{Model of a freeway with toll lanes.}
\label{fig:highway-managed-lanes}
\end{figure}

The capacities and maximum number of vehicles
of the resulting links are proportional to the number of lanes.
Let $l_1$ be the number of toll lanes,
and $l_2$ be the number of general purpose lanes.
Then the capacities
$F_i^1 = F_i l_1 / (l_1 + l_2)$,
$F_i^2 = F_i l_2 / (l_1 + l_2)$.
The maximum number of vehicles $N_i^\xi$, 
the maximum flows $F_i^{\xi, d}$, 
$F_i^{\xi, s}$,
and priorities $p_i^{\xi, s}$,
$\xi = 1, 2$,
are defined in the same manner.

Outflow demands and inflow supplies are defined as in the freeway model:
\begin{alignat}{3}
r_i^d(t) &= \min \{ v_i^r q_i(t), R_i \}, &\qquad&&\quad& i = 1, \dots, K\\
f_i^{\xi, d}(t) &= \min \{ \beta_i^f v_i n_i^\xi(t), F_i^{\xi, d} \},
&& \xi = 1, 2,
&&  i = 0, \dots, K + 1, \\
f_i^{\xi, s}(t) &= \min \{ w_i (N_i^\xi - n_i^\xi(t)), F_i^{\xi, s} \},
&& \xi = 1, 2,
&& i = 1, \dots, K + 1.
\end{alignat}
Travelers can only choose between the toll and the general purpose lanes
when they enter the freeway.
The freeway state and the toll both influence
split ratios of the flow from the entrance into the mainline links.
Denote those split ratios by $\alpha_i^1$, $\alpha_i^2$.
Of course, $\alpha_i^1, \alpha_i^2 \ge 0$,
$\alpha_i^1 + \alpha_i^2 = 1$.

\begin{figure}[htb]
\centering
\includegraphics{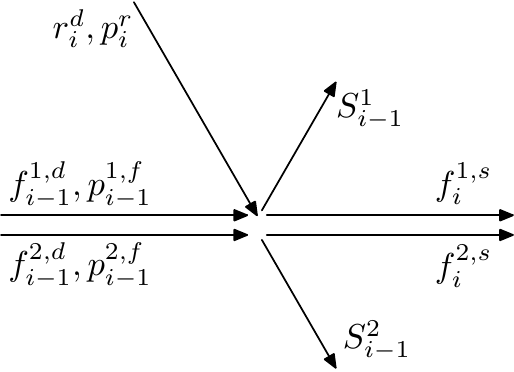}
\caption{Node model of a freeway with toll lanes.}
\label{fig:highway-ml-node}
\end{figure}

Flows $f_{-1}^\xi(t)$ are given,
flows out of the exit links equal demands:
$f_{K + 1}^\xi(t) = f_{K + 1}^{\xi, d}(t)$.
Other flows, namely $f_{i - 1}^\xi(t)$, $r_i^\xi(t)$, $i = 1, \dots, K + 1$,
are determined by the following node model (see Figure~\ref{fig:highway-ml-node}).
The time step~$t$ is implied, but omitted for simplicity.
Compute potential flows $\psi_i^1$, $\psi_i^2$
\begin{equation}
\psi_i^\xi = \min \left\{ \max \left\{
f_i^{\xi, s} \frac{\alpha_i^\xi p_i^r}{\alpha_i^\xi p_i^r + p_{i - 1}^{\xi, f}},
f_i^{\xi, s} - f_{i - 1}^{\xi, d}
\right\},
\alpha_i^\xi r_i^d \right\},
\qquad
\xi = 1, 2.
\end{equation}
The potential flow $\psi_i^1$
is the flow~$r_i^1$
from the entrance to the toll lane of the $i$th mainline link
(according to the general node model
from subsection~\ref{subsec:node-model}),
if this flow is not constrained by the general purpose lane.
Such a constraint may arise, as shown later,
since the flows $r_i^1$, $r_i^2$
should be proportional to the split ratios $\alpha_i^1$, $\alpha_i^2$.
Similarly, $\psi_i^2$ is the flow from the entrance
to the general purpose lane of the $i$th link,
if there are no constraints from the toll lane.

Define
\begin{equation}
\lambda_i^\xi = \begin{cases}
1, & \alpha_i^\xi = 0, \\
\psi_i^\xi / (\alpha_i^\xi r_i^d), & \alpha_i^\xi > 0,
\end{cases}
\qquad
\xi = 1, 2,
\end{equation}
and $\lambda_i = \min \{ \lambda_i^1, \lambda_i^2 \}$.
Clearly, $\lambda_i^\xi \le 1$, therefore $\lambda_i \le 1$.
Finally, compute flows $r_i^\xi$ and $f_{i - 1}^\xi$:
\begin{equation}
r_i^\xi = \lambda_i \alpha_i^\xi r_i^d,
\qquad
f_{i - 1}^\xi = \min \{ f_{i - 1}^{\xi, d}, f_i^{\xi, s} - r_i^\xi \},
\qquad
\xi = 1, 2.
\end{equation}

If the freeway is in free flow and the demand is feasible,
there is no need in toll lane, because it does not provide 
any gain in travel time.
On the other hand, if the freeway is in the fully congested
equilibrium, there is no logic
in redistributing the incoming flows~$r_i$ either:
even if the toll lane becomes uncongested,
the entrances will accumulate queues, reducing the
total output flow of the system and harming everyone.
So, in the case of fully congested freeway, even with feasible flow,
there is no use for a toll lane.
The situation when the toll lane is helpful though,
is when the freeway segment is partially congested.
That is, when without reducing the total output flow
of the system, toll lane can provide saving in travel
time for those choosing to use it.
Figure~\ref{fig:partially-congested} illustrates this case:
assume there is a partially congested freeway segment
(see also Figure~\ref{fig:equilibrium-densities}).
Once we divide this segment into the toll and the general purpose
lanes,
it is possible to redistribute vehicles between these two lanes 
so that the general purpose
lane acts as a storage for the extra vehicles, while the toll
lane frees up, and all of it retaining the original flows.

\begin{figure}[htb]
\centering
\includegraphics{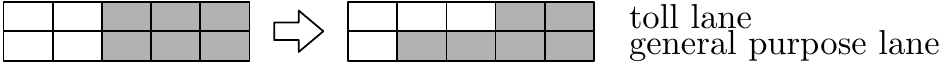}
\caption{Redistributing congestion of a partially congested highway.}
\label{fig:partially-congested}
\end{figure}

The proposed redistribution can be achieved
by controlling the split ratios~$\alpha_i^1$, $\alpha_i^2$
of the on-ramp flows~$r_i$.
This will be explained in the next section.


\section{Control of Split Ratios}\label{sec_control}
Here we present
an on-ramp flow redistribution control algorithm
which aims at bringing the toll lane into a free flow state
and maintaining the free flow state of the toll lane,
provided that queues at the entrances
grow as slowly as possible.

For simplicity, assume that the first link does not have an on-ramp,
and transform the model in such a way,
that the link~0 becomes an on-ramp of link~1.
Thus, there is no 0th link and $f_0^{\xi, s}(t) \equiv 0$, $\xi = 1, 2$.

We assume that
all information about the current freeway state,
namely, densities~$n_i^\xi(t)$ and queue lengths~$q_i(t)$,
is available.
But we know nothing about the future:
at time~$t$ no information about incoming flows
$d_i(t + \Delta t)$, $\Delta t = 0, 1, \dots$,
is available.

The pricing mechanism for using the toll lane
changes split ratios $\alpha_i^1(t)$, $\alpha_i^2(t)$.
This dependency will be discussed in section~\ref{sec_tolls}.
In this section we assume that
the split ratios $\alpha_i^1(t)$, $\alpha_i^2(t)$
can be set directly.

The algorithm for determining the split ratios
of the on-ramp flow~$r_i(t)$ at time step~$t$ is described next.
The time step~$t$ is implied, but will be omitted for simplicity.

\subsection{Queue Growth Rate Minimization}
First, we determine a range of split ratios that minimize
the queue growth rate.
It only makes sense to consider such links~$i$, where
$r_i^d > 0$ and $f_i^{1, s} + f_i^{2, s} > 0$,
since otherwise $r_i^1 = r_i^2 = 0$.

To minimize the queue growth rate,
$\lambda(\alpha_i^1, \alpha_i^2) = \min \{ \lambda_i^1(\alpha_i^1), \lambda_i^2(\alpha_i^2) \}$
should be maximized.
Denote
\begin{align}
\lambda_i^* &=
\max_{\alpha_i^1 \in [0, 1]} \lambda_i(\alpha_i^1, 1 - \alpha_i^1), \\
A_i^1 &= \argmax_{\alpha_i^1 \in [0, 1]} \lambda_i(\alpha_i^1, 1 - \alpha_i^1).
\end{align}
Our goal is to find~$A_i^1$.

It can be shown that $\lambda_i^\xi$ is
a monotonically decreasing function of~$\alpha_i^\xi$,
and $\lambda_i^\xi(\alpha_i^\xi)$ is continuous at least for $\alpha_i^\xi \in (0, 1]$,
$\xi = 1, 2$.
Indeed, if $\alpha_i^\xi \in (0, 1]$, then
\begin{equation}
\lambda_i^\xi(\alpha_i^\xi) = \min \left\{
\max \left\{
\frac{f_i^{\xi,s}}{r_i^d} \frac{p_i^r}{\alpha_i^\xi p_i^r + p_{i - 1}^{\xi, f}},
\frac{f_i^{\xi, s} - f_{i - 1}^{\xi, d}}{\alpha_i^\xi r_i^d}
\right\},
1 \right\}.
\end{equation}
Note that $\lambda_i^\xi(\alpha_i^\xi) \equiv 0$
for $\alpha_i^\xi \in (0, 1]$,
if $p_i^r = 0$ and $f_i^{\xi, s} \le f_{i - 1}^{\xi, d}$.

Let
\begin{equation}
\bar \alpha_i^\xi = \max \{
\alpha_i^\xi \colon
\alpha_i^\xi \in [0, 1],\ \lambda_i^\xi(\alpha_i^\xi) = 1 \}.
\end{equation}
Clearly, $\bar \alpha_i^\xi = \min \{ 1, \max \{ 0, a_i^\xi \} \}$,
where
\begin{equation}
a_i^\xi = \begin{dcases}
\frac{f_i^{\xi, s} - f_{i - 1}^{\xi, d}}{r_i^d},
& p_i^r = 0, \\
\max \left\{
\frac{f_i^{\xi, s} - f_{i - 1}^{\xi, d}}{r_i^d},
\frac{f_i^{\xi, s}}{r_i^d} - \frac{p_{i - 1}^{\xi, f}}{p_i^r}
\right\}, & p_i^r > 0.
\end{dcases}
\end{equation}
Consider the following cases.
\begin{enumerate}
\item $\bar \alpha_i^1 + \bar \alpha_i^2 \ge 1$.
In this case $\lambda_i^1(\alpha_i^1) = \lambda_i^2(\alpha_i^2) = 1$
and hence $\lambda_i(\alpha_i^1, \alpha_i^2) = 1$
for $\alpha_i^2 = 1 - \alpha_i^1$,
$\alpha_i^1 \in [1 - \bar \alpha_i^2, \bar \alpha_i^1]$.
Thus, $A_i^1 = [1 - \bar \alpha_i^2, \bar \alpha_i^1]$.

\item $\bar \alpha_i^1 + \bar \alpha_i^2 < 1$.
In this case $\lambda_i(\alpha_i^1, \alpha_i^2) < 1$
for all $\alpha_i^1, \alpha_i^2 \ge 0$, $\alpha_i^1 + \alpha_i^2 = 1$,
therefore, the flow from the on-ramp is inevitably constrained.

\begin{enumerate}
\item
If $p_i^r = 0$ and $f_i^{\xi, s} \le f_{i - 1}^{\xi, d}$, $\xi = 1, 2$,
then $\bar \alpha_i^1 = \bar \alpha_i^2 = 0$
and $\lambda(\alpha_i^1, 1 - \alpha_i^1) = 0$
for all $\alpha_i^1 \in [0, 1]$,
therefore $A_i^1 = [0, 1]$.

\item
Assume at least one of the inequalities
$p_i^r > 0$, $f_i^{\xi, s} > f_{i - 1}^{\xi, d}$, $\xi = 1, 2$, holds.

If $\lambda_i^1(1 - \bar \alpha_i^2) \ge \lim_{\alpha_2 \to \bar \alpha_i^2 + 0} \lambda_i^2(\alpha_2)$,
then
$A_i^1 = \{ 1 - \bar \alpha_i^2 \}$.

If $\lambda_i^2(1 - \bar \alpha_i^1) \ge \lim_{\alpha_1 \to \bar \alpha_i^1 + 0} \lambda_i^1(\alpha_1)$,
then
$A_i^1 = \{ \bar \alpha_i^1 \}$.

Otherwise there exists exactly one solution~$\alpha_i^{1, *}$
of the equation
$\lambda_i^1(\alpha_i^1) = \lambda_i^2(1 - \alpha_i^1)$,
and $\alpha_i^{1, *} \in (\bar \alpha_i^1, 1 - \bar \alpha_i^2)$,
since
$\lambda_i^\xi(\alpha_i^\xi)$ is strictly decreasing
for $\alpha_i^\xi \in (\bar \alpha_i^\xi, 1]$, $\xi = 1, 2$.
In this case $A_i^1 = \{ \alpha_i^{1, *} \}$.
\end{enumerate}
\end{enumerate}

So the set~$A_i^1$ is either a segment or a point.
The corresponding flow
\begin{equation}
r_i^1 = r_i^1(\alpha_i^1) =
\lambda_i(\alpha_i^1, 1 - \alpha_i^1) \alpha_i^1 r_i^d
\end{equation}
belongs to a segment
$[r_i^{1,\min}, r_i^{1,\max}]$,
where $r_i^{1,\min} = r_i^1(\min A_i^1)$,
$r_i^{1,\max} = r_i^1(\max A_i^1)$.

\subsection{On-Ramp Flow Redistribution}
In order not to redistribute on-ramp flows when
the freeway is in a free flow state,
introduce the following correction.
If $\bar \alpha_i^1 + \bar \alpha_i^2 > 1$
and
$\bar \alpha_i^1 > l_1 / (l_1 + l_2)$
(recall that $l_1$ and $l_2$ are the number of toll and general purpose lanes),
let
\begin{equation*}
\bar \alpha_i^1 = \max\{ 1 - \bar \alpha_i^2, l_1 / (l_1 + l_2) \}
\end{equation*}
and recalculate $r_i^{1, \max} = \bar \alpha_i^1 r_i^d$.

Compute the maximum maintainable level of densities
for the toll lane $n^{1,*}$.
A maintainable level of densities
is a vector~$n^{1, *}$
corresponding to a free-flow state
($\beta_i^f v_i n_i^{1, *} \le F_i^{1, d}$, $i = 1, \dots, K$),
such that
if $n^1(t) \le n^{1, *}$ and $r^1(t) = 0$,
then $n^1(t + 1) \le n^{1, *}$.
The maximum maintainable level of densities~$n^{1, *}$
is defined as follows.
First, compute flows~$f_i^{1, *}$, $i = K + 1, \dots, 1$ as follows:
$f_{K + 1}^{1, *} = F_{K + 1}^1$,
$f_i^{1, *} = \min \{ F_i^{1, d}, f_{i + 1}^{1, *} / \beta_{i + 1}^f \}$,
$i = K, \dots, 1$.
After that, compute densities~$n_i^{1, *} = f_i^{1, *} / (\beta_i^f v_i)$.

Next, define the maximum free-flow equilibrium~$n^{1, e}$.
Let $M$ be a number of on-ramps,
$i_1 < i_2 < \dots < i_M$ be the indices of links with on-ramps,
$i_1 = 1$ by assumption. Let $i_{M + 1} = K + 2$.
Define $f_{i_m}^{1, e} = f_{i_m}^{1, *}$,
$f_i^{1, e} = \beta_i^f f_{i - 1}^{1, e}$, $i = i_m + 1, \dots, i_{m + 1} - 1$,
$n_i^{1, e} = f_i^{1, e} / (\beta_i^f v_i)$.
Is is easily shown that $f^{1, e} \le f^{1, *}$
and $n^{1, e}$ is an equilibrium
corresponding to the on-ramp flow~$r^{1, e}$:
$r_{i_m}^{1, e} = f_{i_m}^{1, e} / \beta_{i_m}^f - f_{i_m - 1}^{1, e} \ge 0$.
Define
\begin{equation}
N^{1, e}(i_m, i_{m + 1}) = \sum_{i = i_m}^{i_{m + 1} - 1} n_i^{1,e}.
\end{equation}
The controller keeps the total number of vehicles
between entrances $m$ and $m + 1$ less than $N^{1, e}(i_m, i_{m + 1})$,
and the incoming flow $f_{i_m - 1}^1 + r_{i_m}^1$
less than equilibrium flow $f_{i_m}^{1, e} / \beta_{i_m}^f$,
if the minimum-queue-growth-rate constraint permits that.

If we knew flows $r_{i_m}^1$, deriving
split ratios $\alpha_{i_m}^1$, $\alpha_{i_m}^2$ would be easy:
if $r_{i_m}^d = 0$ or~$\lambda_{i_m}^* = 0$,
then the on-ramp flow $r_{i_m} = 0$ regardless of split ratios,
otherwise
$\alpha_{i_m}^1 = r_{i_m}^1 / r_{i_m}$, $\alpha_{i_m}^2 = 1 - \alpha_{i_m}^1$,
where $r_{i_m} = \lambda_{i_m}^* r_{i_m}^d$.
So, we need to determine
on-ramp flows $r_{i_m}^1 \in [r_{i_m}^{1,\mathrm{min}}, r_{i_m}^{1, \mathrm{max}}]$.

At every time step~$t$, compute the total number of vehicles
between every pair of adjacent on-ramps
$n^1(t, i_m, i_{m + 1}) = \sum_{i = i_m}^{i_{m + 1} - 1} n_i^1(t)$,
estimate flows $s_i^1(t)$, $i = 1, \dots, K$,
and $f_{i_m - 1}^1(t)$, $m = 1, \dots, M + 1$,
for $r_{i_m}^1(t) = r_{i_m}^{1, \mathrm{min}}(t)$.
By $\tilde s_i^1(t)$ and $\tilde f_{i_m - 1}^1(t)$ denote the estimates.

If
$f_{i_m - 1}^{1, d}(t) + r_{i_m}^{1, \mathrm{max}}(t) > \min \{ f_{i_m}^{1, s}(t), f_{i_m}^{1, e} / \beta_{i_m}^f \}$,
then
to possibly avoid constraining the flow from the upstream link~$f_{i_m - 1}$
and exceeding the equilibrium inflow~$f_{i_m}^{1, e} / \beta_{i_m}^f$,
set
\begin{equation*}
r_{i_m}^{1, \mathrm{max}}(t) =
\max \{ r_{i_m}^{1, \mathrm{min}}(t),
\min \{ f_{i_m}^{1, s}(t), f_{i_m}^{1, e} / \beta_{i_m}^f \} - f_{i_m - 1}^{1, d}(t) \}.
\end{equation*}

\begin{enumerate}
\item Let $\Delta n = 0$, $m = M$, $\gamma_M = 1$.

Here $\Delta n$ is an excess of vehicles in the toll lane,
$m$ is the number of current entrance, from~$M$ down to~1,
$0 \le \gamma_m \le 1$ is a reduction coefficient.
If $\gamma_m$ is strictly less than~1,
it means that some downstream entrance is unable to reduce congestion
in the toll lane on its own.
Therefore, we reduce the target number of vehicles between
entrances $m$ and $m + 1$ by $\gamma_m$
in order to reduce the flow to the downstream links.

\item \label{item:control-loop}
Compute
$\Delta n^1(t, i_m, i_{m + 1}; \gamma_{i_m}) =%
n^1(t, i_m, i_{m + 1}) - \gamma_m N^{1, e}(i_m, i_{m + 1})$,
\begin{equation*}
\Delta n \leftarrow \Delta n + \Delta n^1(t, i_m, i_{m + 1}; \gamma_m)
+ \tilde f_{i_m - 1}^1(t) - \tilde f_{i_{m + 1} - 1}^1(t)
- \sum_{i = i_m}^{i_{m + 1} - 1} \tilde s_i(t).
\end{equation*}
We estimate the excess number of vehicles in the toll lane
in links $i_m, \dots, i_{m + 1} - 1$.

\item Determine
$r_{i_m}^1(t) = \max \{ r_{i_m}^{1, \mathrm{min}}(t),%
\min \{ r_{i_m}^{1, \mathrm{max}}(t), - \Delta n \} \}$.

The on-ramp flow $r_{i_m}^1$ should reduce the excess number of vehicles, if possible.

\item If $m = 1$, stop: all on-ramp flows for
the current time step have been determined.

\item Recalculate $\Delta n \leftarrow \max\{ 0, \Delta n + r_{i_m}^1(t) \}$.

\item Compute the reduction coefficient for the upstream entrance:
$\gamma_{m - 1} = \min\{ 1, (f_{i_m}^{1, s}(t) - r_{i_m}^1(t)) / f_{i_m - 1}^{1, e} \}$.

\item Set $m \leftarrow m - 1$ and go to step~\ref{item:control-loop}.
\end{enumerate}

\subsection{Theorems}
We now prove that the presented controller
keeps the toll lane in a free-flow state,
if the minimum-queue-growth-rate condition allows it.
Moreover, even if the toll lane is initially congested,
it becomes almost uncongested in finite time.

The toll lane is considered as an independent system,
and the upper index $\xi = 1$ is omitted for simplicity.

\begin{theo} \label{th:freeflow-preserve}
Suppose $n(t) \le n^e$ and the inequality
$r_{i_m}(t) + f_{i_m - 1}^d(t) \le f_{i_m}^e / \beta_{i_m}^f$
holds for all entrances~$i_m$.
Then $n(t + 1) \le n^e$.
\end{theo}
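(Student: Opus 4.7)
The plan is to show that the hypothesis $n(t) \le n^e$ keeps the toll lane fully in a free-flow regime at time $t$, in which every actual flow equals its demand. Substituting $f_i(t) = f_i^d(t) = \beta_i^f v_i n_i(t)$ and $s_i(t) = (\beta_i^s/\beta_i^f) f_i(t)$ into the continuity equation~\eqref{eq:mainline-dynamics}, and using $\beta_i^f + \beta_i^s = 1$, collapses the density update to the linear recursion
\begin{equation*}
n_i(t + 1) = (1 - v_i)\, n_i(t) + f_{i - 1}(t) + r_i(t).
\end{equation*}
The equilibrium obeys the same identity, $n_i^e = (1 - v_i) n_i^e + f_{i-1}^e + r_i^e$, because $\beta_i^f(f_{i-1}^e + r_i^e) = f_i^e = \beta_i^f v_i n_i^e$. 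With $0 < 1 - v_i$, the bound $n_i(t+1) \le n_i^e$ then reduces to $f_{i-1}(t) + r_i(t) \le f_{i-1}^e + r_i^e$. This holds at a non-on-ramp link $i$ (where $r_i = r_i^e = 0$) since $f_{i-1}(t) = \beta_{i-1}^f v_{i-1} n_{i-1}(t) \le \beta_{i-1}^f v_{i-1} n_{i-1}^e = f_{i-1}^e$, and at an on-ramp link $i_m$ it is exactly the theorem's hypothesis $f_{i_m-1}^d(t) + r_{i_m}(t) \le f_{i_m}^e/\beta_{i_m}^f = f_{i_m-1}^e + r_{i_m}^e$.

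The substantive work is to justify the free-flow regime. Three checks are needed, all stemming from $n(t) \le n^e$. First, $\beta_i^f v_i n_i(t) \le \beta_i^f v_i n_i^e = f_i^e \le F_i^d$, so the outflow demand is never capacity-limited and $f_i^d(t) = \beta_i^f v_i n_i(t)$. Second, $n_i(t) \le n_i^e \le F_i^d/(\beta_i^f v_i) \le F_i/v_i$, so inequality~\eqref{eq:freeflow-noconstraint} yields $w_i(N_i - n_i(t)) \ge F_i$, hence $f_i^s(t) = F_i$. Third, the combined upstream demand at each node fits within the downstream supply: at a non-on-ramp node $f_{i-1}^d(t) \le f_{i-1}^e = f_i^e/\beta_i^f \le F_i^d/\beta_i^f \le F_i = f_i^s(t)$ (using the definition $F_i^d = \beta_i^f \min\{F_i, S_i/\beta_i^s\}$), and at an on-ramp node $f_{i_m-1}^d(t) + r_{i_m}(t) \le f_{i_m}^e/\beta_{i_m}^f \le F_{i_m} = f_{i_m}^s(t)$ by the same chain combined with the hypothesis. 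Case~1 of the freeway node model therefore applies at every node, giving $f_{i-1}(t) = f_{i-1}^d(t)$ everywhere.

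The single delicate step is the third check: because $f_i^s(t)$ and $f_i^e$ need not be directly comparable, the supply bound must be routed through the physical capacity $F_i$, using both the definition of $F_i^d$ and the free-flow clearance assumption~\eqref{eq:freeflow-noconstraint}. Once this chain is spelled out, the argument is complete: the first paragraph's termwise comparison of two identical linear first-order recursions, with separate justifications of the flow inequality at on-ramp and non-on-ramp links, delivers $n_i(t+1) \le n_i^e$ for every $i$.
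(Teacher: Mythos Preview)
Your proof is correct and follows essentially the same route as the paper's: you verify that $n(t)\le n^e$ forces the free-flow regime (demands unclipped, supplies at capacity, node model case~1 everywhere), collapse the conservation law to $n_i(t+1)=(1-v_i)n_i(t)+f_{i-1}(t)+r_i(t)$, and then compare termwise with the identical recursion satisfied by $n^e$, splitting into on-ramp and non-on-ramp links exactly as the paper does. Your explicit routing of the supply bound through $F_i^d/\beta_i^f\le F_i$ via~\eqref{eq:outflow-capacity} and~\eqref{eq:freeflow-noconstraint} is a bit more detailed than the paper's terse $F_i^s\ge f_i^e/\beta_i^f$, but the argument is the same.
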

\begin{proof}
Note that for $m > 1$, $n_{i_m - 1}(t) \le n^e_{i_m - 1}$ and
\begin{equation*}
f_{i_m - 1}^d(t) \le f_{i_m - 1}^e \le f_{i_m}^e / \beta_{i_m}^f,
\end{equation*}
therefore the inequality
$r_{i_m}(t) + f_{i_m - 1}^d(t) \le f_{i_m}^e / \beta_{i_m}^f$
holds at least for $r_{i_m}(t) = 0$.

Clearly, $f_{i - 1}(t) = f_{i - 1}^d(t)$ for $i = 2, \dots, K + 1$, since
$f_i^s(t) = F_i^s \ge f_i^e / \beta_i^f \ge f_{i - 1}^e \ge f_{i - 1}^d(t)$
if there is no~on-ramp in link~$i$,
and
$f_i^s(t) - r_i(t) = F_i^s - r_i(t) \ge f_i^e / \beta_i^f - r_i(t) \ge f_{i - 1}^d(t)$,
otherwise.
Therefore,
$f_i(t) + s_i(t) = f_i^d(t) / \beta_i^f = v_i n_i(t)$
for all~$i = 1, \dots, K + 1$.
Additionally,
$f_{i - 1}(t) \le f_{i - 1}^e \le f_i^e / \beta_i^f$
if there is no on-ramp in link~$i$,
and $r_i(t) + f_{i - 1}(t) \le f_i^e / \beta_i^f$, otherwise.
Thus, if $n(t) \le n^e$, then
\begin{equation*}
n_i(t + 1) = n_i(t) + f_{i - 1}(t) + r_i(t) - f_i(t) - s_i(t)
\le n_i(t) (1 - v_i) + f_i^e / \beta_i^f
\le n_i^e (1 - v_i) + f_i^e / \beta_i^f
= n_i^e
\end{equation*}
for all~$i$.
\end{proof}
This theorem means that if the state of the toll lane is in a ``target zone''
$n(t) \le n^e$ at time~$t$, then it remains there at time~$t + 1$
if the on-ramp flow demand~$r^d(t)$ isn't too high
and the general purpose lane isn't too congested near the entrances.

\begin{lemma} \label{lemma:monotonicity}
Suppose $n^-(t) \le n^+(t)$ and $r^-(t) \le r^+(t)$.
Then $n^-(t + 1) \le n^+(t + 1)$.
\end{lemma}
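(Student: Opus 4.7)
The plan is to prove the stronger statement that each component of the one-step
update map $(n(t), r(t)) \mapsto n_i(t + 1)$ is coordinatewise non-decreasing;
the lemma then follows at once by interpolating from $(n^-(t), r^-(t))$ to
$(n^+(t), r^+(t))$ one coordinate at a time, raising a single entry per step.

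First, I would record the explicit form of the update. With the toll lane viewed
in isolation and the $r_i$'s treated as exogenous, the node model degenerates to
$f_{i-1} = \min\{f_{i-1}^d,\ f_i^s - r_i\}$, where
$f_j^d = \min\{\beta_j^f v_j n_j,\ F_j^d\}$ is non-decreasing in $n_j$ and
$f_j^s = \min\{w_j(N_j - n_j),\ F_j^s\}$ is non-increasing in $n_j$. Since
$n_i(t + 1) = n_i(t) + f_{i-1}(t) + r_i(t) - f_i(t)/\beta_i^f$, the right-hand
side depends only on $n_{i-1}, n_i, n_{i+1}, r_i, r_{i+1}$, so it suffices to
check monotonicity in each of these five variables.

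Four of the five cases are immediate. Raising $n_{i-1}$ can only increase
$f_{i-1}^d$ and hence $f_{i-1}$, which contributes to $n_i(t + 1)$ positively.
Raising $n_{i+1}$ or $r_{i+1}$ can only shrink $f_{i+1}^s - r_{i+1}$ and hence
$f_i$, which enters with a minus sign. Raising $r_i$ by $\delta$ adds $\delta$
directly while lowering $f_{i-1}$ by at most $\delta$ through the residual-supply
term inside the $\min$, so the net effect is nonnegative.

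The substantive case is monotonicity in $n_i$ itself, and this is where I expect
the main obstacle. Three effects compete: a direct $+1$; a possible decrease in
$f_{i-1}$ of up to $w_i$ per unit (through the supply branch $w_i(N_i - n_i)$);
and a possible increase in $f_i/\beta_i^f$ of up to $v_i$ per unit (through the
demand branch $\beta_i^f v_i n_i$). A naive bound gives $1 - v_i - w_i$, which
can fail to be positive. The fix is to observe, via
Assumption~\eqref{eq:freeflow-noconstraint}, that these two adverse effects
cannot be active at the same value of $n_i$: the supply branch is active only
when $n_i \ge N_i - F_i/w_i$, while the demand branch is active only when
$n_i \le F_i^d/(\beta_i^f v_i) \le F_i/v_i$, and the inequality
$F_i/v_i + F_i/w_i \le N_i$ makes these two ranges disjoint. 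Hence on every
piece of the piecewise-linear update the coordinate slope in $n_i$ is at least
$\min\{1 - v_i,\ 1 - w_i\} > 0$, so by continuity the update is globally
non-decreasing in $n_i$. Combining the five cases yields the coordinatewise
monotonicity and therefore the lemma.
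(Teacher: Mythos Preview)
Your argument is correct. The paper itself does not prove Lemma~\ref{lemma:monotonicity}; it simply remarks that ``a similar lemma is proved in~\cite{gomes08}'' and moves on. Your coordinate-wise monotonicity argument is precisely the standard route used in that reference: one shows that the one-step CTM update is order-preserving in each state and input coordinate, with the only delicate case being the dependence on $n_i$ itself, which is handled exactly as you describe via the fundamental-diagram condition~\eqref{eq:freeflow-noconstraint} that prevents the demand slope $v_i$ and the supply slope $w_i$ from acting simultaneously. Your use of the toll-lane node formula $f_{i-1}=\min\{f_{i-1}^d,\,f_i^s-r_i\}$ is also consistent with the paper's Section~2.4, so the reduction to five variables is legitimate. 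In short, you have reconstructed the cited proof, and nothing further is needed.
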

A similar lemma is proved in~\cite{gomes08}.

\begin{theo} \label{th:freeflow-asymptotic}
Suppose $f^e$ is an equilibrium flow,
the inequalities
\begin{equation} \label{ineq:th2-flows}
f_{i_m - 1}(t) + r_{i_m}(t) \le f_{i_m}^e / \beta_{i_m}^f,
\qquad
f_{i_m}^s(t) \ge f_{i_m - 1}^e
\end{equation}
hold for all links with on-ramps~$i_m$ for all~$t$,
and
\begin{equation} \label{ineq:th2-densities}
\sum_{i = i_m}^{i_{m + 1} - 1} n_i(t) \le \sum_{i = i_m}^{i_{m + 1} - 1} n_i^e,
\qquad
m = 1, \dots, M,
\end{equation}
for all~$t$.
Here $n_i^e = f_i^e / (\beta_i^f v_i)$.

Then for any $\varepsilon > 0$ there exists
a time step $T = T(\varepsilon)$ such that
$n_i(t) \le n_i^e + \varepsilon$, $i = 1, \dots, K + 1$,
for $t \ge T(\varepsilon)$.
\end{theo}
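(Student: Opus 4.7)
The plan is to prove convergence segment-by-segment, where a segment consists of the mainline links $\{i_m, i_m+1, \ldots, i_{m+1}-1\}$ between two consecutive on-ramps, by combining the monotonicity Lemma~\ref{lemma:monotonicity} with an explicit geometric contraction estimate obtained in the free-flow regime. The two hypotheses~\eqref{ineq:th2-flows} are precisely what is needed to decouple the segments: the inflow bound $f_{i_m-1}(t)+r_{i_m}(t)\le f_{i_m}^e/\beta_{i_m}^f$ caps the combined mainline-plus-on-ramp influx into each segment by the equilibrium inflow, while the supply bound $f_{i_m}^s(t)\ge f_{i_m-1}^e$, together with the equilibrium feasibility $f_{i_m-1}^e\le F_{i_m-1}^d$, ensures that the upstream segment's outflow is never throttled at the on-ramp node. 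Each segment can therefore be analysed in isolation, driven only by a head inflow bounded above by the equilibrium value.

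Next I would use Lemma~\ref{lemma:monotonicity} to dominate the true trajectory $n(t)$ inside a fixed segment by an auxiliary one $\bar n(t)$ whose head inflow is set identically equal to the worst case $f_{i_m}^e/\beta_{i_m}^f$ and whose initial state equals $n(0)$. Since the auxiliary inflow dominates the actual inflow pointwise in time, the lemma yields $n(t)\le\bar n(t)$ componentwise for every $t$, so it suffices to prove $\bar n_i(t)\to n_i^e$ as $t\to\infty$. In the free-flow regime the segment dynamics reduce to the linear cascade $\bar n_i(t+1)=(1-v_i)\bar n_i(t)+\bar f_{i-1}(t)$ with $\bar f_{i-1}(t)=\beta_{i-1}^f v_{i-1}\bar n_{i-1}(t)$; this is a triangular system of contractions with rates $1-v_i<1$ and unique fixed point $n_i^e$. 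An induction on $i$ from the head to the tail of the segment, combined with summation of the resulting geometric series, gives $|\bar n_i(t)-n_i^e|\le C\rho^t$ for explicit constants $C$ and $\rho\in(0,1)$ depending on the initial segment excess and on $\min_i v_i$. Setting $T(\varepsilon)=\lceil\log(\varepsilon/C)/\log\rho\rceil$ then delivers the desired bound.

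The main obstacle I anticipate is the free-flow verification for the auxiliary system $\bar n(t)$. The hypothesis~\eqref{ineq:th2-densities} only bounds the \emph{segment total}, so in principle a single link could momentarily carry most of the segment's mass and violate a local capacity or supply cap, invalidating the linear recursion above. I would handle this by a second invocation of Lemma~\ref{lemma:monotonicity}: further dominate $\bar n(0)$ by a configuration that concentrates the total segment mass at the head link $i_m$ (which is well-defined because $N_{i_m}$ is large enough by~\eqref{eq:freeflow-noconstraint} applied at $n_{i_m}^e$), and then propagate the free-flow property forward in time by induction on $t$. The inductive step uses two ingredients: the inflow cap $f_{i_m}^e/\beta_{i_m}^f\le F_{i_m}^d/\beta_{i_m}^f$ keeps the demand side slack at the head, and the CFL-type inequality~\eqref{eq:freeflow-noconstraint} together with $\bar n_i(t)\le n_i^e$ keeps the supply side slack at every other link, so that the piecewise-linear demand/supply become linear throughout. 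Once free flow is established, the contraction analysis of the second paragraph applies directly and yields the claimed $T(\varepsilon)$.
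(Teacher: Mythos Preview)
Your segment-by-segment reduction and the use of Lemma~\ref{lemma:monotonicity} to pass to an auxiliary trajectory with constant equilibrium inflow are reasonable opening moves, and they are also implicit in the paper's argument. The gap is exactly where you flag it yourself, but your proposed repair does not close it.

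Concentrating the total segment mass $\sum_{i=i_m}^{i_{m+1}-1} n_i(0)\le\sum_{i=i_m}^{i_{m+1}-1} n_i^e$ in the head link $i_m$ is not, in general, an admissible state: for a segment of even moderate length this sum can exceed both $F_{i_m}/v_{i_m}$ and $N_{i_m}$, so neither inequality~\eqref{eq:freeflow-noconstraint} nor anything else in the paper guarantees the configuration exists, let alone that it is in free flow. Even when it is a valid state, the head link is then congested, so the demand saturates at $F_{i_m}^d$ and the update is no longer the linear recursion $\bar n_i(t+1)=(1-v_i)\bar n_i(t)+\bar f_{i-1}(t)$ on which your geometric estimate rests. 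Finally, the inductive ingredient you invoke, ``$\bar n_i(t)\le n_i^e$ keeps the supply side slack at every other link'', is precisely the conclusion you are trying to establish, so the induction is circular. Note also that hypothesis~\eqref{ineq:th2-densities} is assumed for the \emph{actual} trajectory at \emph{every} time $t$; once you pass to the dominating trajectory $\bar n$ with inflated head inflow, you lose any control on its segment totals after $t=0$, and your scheme never uses~\eqref{ineq:th2-densities} again. That is a strong sign something is missing, since the statement is certainly false without that hypothesis.

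The paper sidesteps the free-flow issue entirely. After a change of variables that eliminates the off-ramps, it works directly with the (possibly congested) trajectory and tracks, within each segment, the maximal runs (``clusters'') of links with $n_i(t)>n_i^e$. One shows that cluster downstream boundaries can only move downstream, clusters cannot split, and no new clusters are born; hence the cluster pattern stabilises in finite time. The Lyapunov quantity $h_m(t)=\sum_{i=i_m}^{i_{m+1}-1}\max\{0,n_i(t)-n_i^e\}$ is nonincreasing, and if it were bounded below by some $\delta>0$, the weighted segment bound coming from~\eqref{ineq:th2-densities} forces a compensating deficit of definite size upstream of the most upstream cluster, which in turn erodes that cluster by a fixed positive amount every $O(1)$ steps---contradicting the stabilisation. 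This argument never assumes any link is in free flow, which is why it succeeds where the linear-contraction route stalls.
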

\begin{proof}[Proof of Theorem~\ref{th:freeflow-asymptotic} (Outline)]
First transform the system to get rid of the off-ramps.
Multiply the ``inflow capacity'' of the mainline link $F_i^s$,
the on-ramp capacity $R_i$,
the maximum number of vehicles~$N_i$,
the incoming flow~$d_i(t)$,
the number of vehicles in mainline link $n_i(t)$ and on the on-ramp $q_i(t)$
by $\mu_i = \prod_{j = 1}^{i - 1} (\beta_j^f)^{-1}$.
Multiply the ``outflow capacity'' of the $i$th link $F_i^d$ by $\mu_{i + 1}$.
For the transformed system denoted by~$\hat{\phantom m}$
the following equations hold:
\begin{align*}
\hat f_{i - 1}^d(t)
&= \min \{ v_{i - 1} \hat n_{i - 1}(t), \hat F_{i - 1}^d \}
= \mu_i f_{i - 1}^d(t), \\
\hat f_i^s(t)
&= \min \{ w_i (\hat N_i - \hat n_i(t)), \hat F_i^s \}
= \mu_i f_i^s(t), \\
\hat r_i^d(t) &= \min \{ v_i^r \hat q_i(t), \hat R_i \}
= \mu_i r_i^d(t).
\end{align*}
Therefore,
$\hat f_{i - 1}(t) = \mu_i f_{i - 1}(t)$,
$\hat r_i(t) = \mu_i r_i(t)$,
$\hat f_i(t) = \mu_{i + 1} f_i(t) = \mu_i f_i(t) / \beta_i^f$,
thus $\hat n_i(t + 1) = \mu_i n_i(t + 1)$,
$\hat q_i(t + 1) = \mu_i q_i(t + 1)$.
That is, the transformed system is equivalent to the initial system.
Multiply the equilibrium flow $f_i^e$ by $\mu_{i + 1}$.
The inequalities
$f_{i_m - 1}(t) + r_{i_m}(t) \le f_{i_m}^e / \beta_{i_m}^f$,
$f_{i_m}^s(t) \ge f_{i_m - 1}^e$
transform into
$\hat f_{i_m - 1}(t) + \hat r_{i_m}(t) \le \hat f_{i_m}^e$,
$\hat f_{i_m}^s(t) \ge \hat f_{i_m - 1}^e$,
while the inequality
$\sum_{i = i_m}^{i_{m + 1} - 1} n_i(t) \le \sum_{i = i_m}^{i_{m + 1} - 1} n_i^e$
transforms into
$\sum_{i = i_m}^{i_{m + 1} - 1} \mu_i^{-1} \hat n_i(t) %
\le \sum_{i = i_m}^{i_{m + 1} - 1} \mu_i^{-1} \hat n_i^e$.
Hence, it suffices to show that the theorem holds for
the freeway with no off-ramps, but the inequality
$\sum_{i = i_m}^{i_{m + 1} - 1} n_i(t) \le \sum_{i = i_m}^{i_{m + 1} - 1} n_i^e$
should be replaced with
\begin{equation*}
\sum_{i = i_m}^{i_{m + 1} - 1} \alpha_i n_i(t)
\le \sum_{i = i_m}^{i_{m + 1} - 1} \alpha_i n_i^e,
\end{equation*}
where
$\alpha_i = \mu_i^{-1} = \prod_{j = 1}^{i - 1} \beta_j^f$.
Note that
$1 = \alpha_1 \ge \alpha_2 \ge \dots \ge \alpha_{K + 1} > 0$.

Consider the links between two adjacent on-ramps.
Consider the clusters of links with $n_i(t) > n_i^e$.
It is easy to show that the downstream boundary of a cluster
can only move downstream,
one cluster cannot split into two or more,
and new clusters cannot be created.
So, starting from some point in time,
the number of clusters remains constant
and their downstream boundaries are fixed.

Denote
\begin{equation*}
h_m(t) = \sum_{i = i_m}^{i_{m + 1} - 1} \max \{ 0, n_i(t) - n_i^e \}.
\end{equation*}
We shall prove that $h_m(t) \to 0$ as $t \to \infty$ for all~$m$.

If there are no clusters between on-ramps $m$ and $m + 1$,
then, clearly, $h_m = 0$.

It is easily shown that $h_m(t)$ is decreasing and nonnegative,
thus there exists $\lim_{t \to \infty} h_m(t) \ge 0$.
Suppose~$\lim_{t \to \infty} h_m(t)  = \delta > 0$.
Let $i_m^*$ be the downstream boundary of the most upstream cluster
between the on-ramps $m$ and $m + 1$.
Using lemma~\ref{lemma:monotonicity},
it can be demonstrated that $\liminf_{t \to \infty} n_i(t) \ge n_i^e$,
$i = i_m^* + 1, \dots, i_{m + 1} - 1$.
Consider only those time steps where the number of clusters
and their downstream boundaries are stabilised (that is, do not change anymore)
and
$n_i^e - \eta_i \le n_i(t)$ for some small $\eta_i > 0$,
say, $\eta_i = \delta \alpha_{i_{m + 1}} / (2 (i_{m + 1} - i_m) \alpha_i)$,
$i = i_m^* + 1, \dots, i_{m + 1} - 1$.
We now demonstrate that the most upstream cluster disappears in finite time.
Since $h_m(t) \ge \delta$,
\begin{align*}
0 &\ge \sum_{i = i_m}^{i_m - 1} \alpha_i (n_i(t) - n_i^e) = \\
&= \sum_{\substack{i < i_m^*, \\ n_i(t) \le n_i^e}} \alpha_i (n_i(t) - n_i^e)
+ \sum_{n_i(t) > n_i^e} \alpha_i (n_i(t) - n_i^e)
+ \sum_{\substack{i > i_m^*, \\ n_i(t) \le n_i^e}} \alpha_i (n_i(t) - n_i^e) \ge \\
&\ge \alpha_{i_m} \sum_{\substack{i < i_m^*, \\ n_i(t) \le n_i^e}} (n_i(t) - n_i^e)
+ \delta - \sum_{i = i_m}^{i_{m + 1} - 1} \alpha_i \eta_i
= \alpha_{i_m} \sum_{\substack{i < i_m^*, \\ n_i(t) \le n_i^e}} (n_i(t) - n_i^e)
+ \alpha_{i_{m + 1}} \frac{\delta}2.
\end{align*}
Thus the following inequality holds:
\begin{equation} \label{eq:before-cluster}
\sum_{\substack{i < i_m^*, \\ n_i(t) \le n_i^e}} (n_i^e - n_i(t))
\ge \frac{\delta \alpha_{i_{m + 1}}}{2\alpha_{i_m}}.
\end{equation}
Consequently, at each time step~$t$ there exists $i \in \{i_m, \dots, i_m^* - 1\}$
such that
\begin{equation*}
n_i(t) \le n_i^e - \frac{\delta \alpha_{i_{m + 1}}}{2 \alpha_{i_m} (i_m^* - i_m)}.
\end{equation*}
This implies that oftener than every $(i_m^* - t_m)$ steps
the upstream cluster size will decrease by at least
\begin{equation*}
\beta_{i_m}^f \times \ldots \times \beta_{i_m^* - 1}^f \times
\frac{\delta \alpha_{i_{m + 1}}}{2 \alpha_{i_m} (i_m^* - i_m)}
= \mu > 0.
\end{equation*}
Therefore, the upstream cluster exists at most
$\lceil h_m(t) (i_m^* - i_m) / \mu\rceil$ time steps.
This contradicts our assumption that the number of clusters is stabilized.
Thus $h_m(t) \to 0$ as $t \to 0$,
which concludes the proof.
\end{proof}

\begin{remark}
In theorem~\ref{th:freeflow-asymptotic}
it would be sufficient to require that the inequality~\ref{ineq:th2-densities}
\begin{equation*}
\sum_{i = i_m}^{i_{m + 1} - 1} n_i(t) \le %
\sum_{i = i_m}^{i_{m + 1} - 1} n_i^e
\end{equation*}
holds only asymptotically, that is,
\begin{equation*}
\limsup_{t \to +\infty} \sum_{i = i_m}^{i_{m + 1} - 1} n_i(t) \le
\sum_{i = i_m}^{i_{m + 1} - 1} n_i^e.
\end{equation*}
\end{remark}

The control algorithm attemps to satisfy
the inequalities of theorem~\ref{th:freeflow-asymptotic},
$\sum_{i = i_m}^{i_{m + 1} - 1} n_i(t + 1) \le \sum_{i = i_m}^{i_{m + 1} - 1} n_i^e$
and
$f_{i_m - 1}(t) + r_{i_m}(t) \le f_{i_m}^e / \beta_{i_m}^f$,
at step~3.
If that is impossible,
a reduction coefficient~$\gamma_{m - 1}$ is introduced at step~6
to reduce the flow from upstream links.
If the requirements if the theorem can be met,
then the toll lane becomes almost uncongested
in finite time,
that is, for arbitrarily small~$\varepsilon > 0$
the density vector~$n^1(t)$ is component-wise
less than $n^{1, e} + \varepsilon$,
starting from some time step~$t$,
regardless of the initial state.

\begin{remark}
The suggested controller calculates only split ratios for on-ramp flow~$r$.
However, this scheme can be modified so, that
mainline travelers are also allowed to change lanes,
paying a toll if they change from the general purpose to the toll lane.
Since the controller keeps the toll lane less congested than general purpose lane,
only the travelers from the general purpose lane may want to change lanes.
In this model the split ratios of on-ramp flow~$r$
and the split ratios of flow~$f^2$ from general purpose lane
both depend on the price.
The price for entering the toll lane should be the same
for those drivers who come from on-ramps and
for those coming from the general purpose lane.
For that reason the split ratios for on-ramp flow~$r_i$ and the split ratios for
the flow from the general purpose lane~$f_{i - 1}^2$ are ``paired'',
since they are both defined by the same price.
The actual flows are defined by the generic node model presented in~Subsection~\ref{subsec:node-model}.

Some constraints, such as nonnegativity and boundedness,
can be imposed on tolls.
A toll is \emph{admissible} if it satisfies all constraints.
To narrow the set of split ratios corresponding to admissible tolls
we could introduce additional requirements,
for example, the minimization of queue length growth
or the maximization of total flow through the node.
Then an algorithm similar to the one presented above can be applied
at each time step.
\end{remark}

\subsection{Examples}

Consider a freeway with one on-ramp (in the first link)
and no exits, except for the last link, $(K + 1)$.
All links have equal capacities, free flow speeds and congestion speeds,
but the last link, $K + 1$, is a bottleneck with a lower capacity:
$F_{K + 1}^\xi < F_K^\xi = F_{K - 1}^\xi = \dots = F_1^\xi$, $\xi = 1, 2$.
The initial state of the freeway is an equilibrium.

\begin{figure}[htb]
\centering
\includegraphics{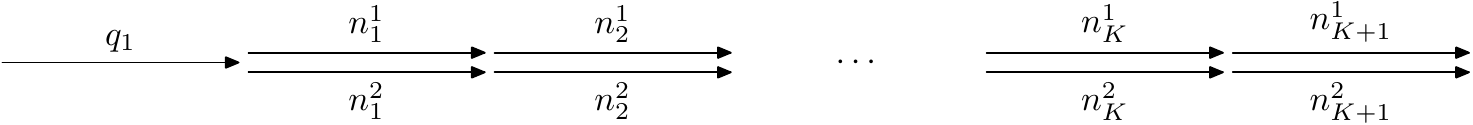}
\caption{Freeway with one entrance and one exit.}
\label{fig:freeway-1-1}
\end{figure}

\paragraph{Scenario~1}
Both the toll lane and the general-purpose lanes are uncongested
from link~1 to some link~$i$ and
congested from link~$(i + 1)$ to link~$(K + 1)$.
The incoming flow equals the bottleneck capacity: $d_0 = F_{K + 1}^1 + F_{K + 1}^2$.
The on-ramp flow~$r_0$ is redistributed in order to reduce congestion in the toll lane,
but only until there are uncongested links in the general purpose lane.

\emph{Scenario~1a}:
The number of uncongested links is too small
for the toll lane to become fully uncongested:
$i < (K + 1) / 2$.
Figure~\ref{fig:scenario-1a} illustrates this case,
the densities (in vehicles per mile) are color-coded.

\def\scale{.75}

\begin{figure}[htbp]
\centering
\includegraphics[scale=\scale]{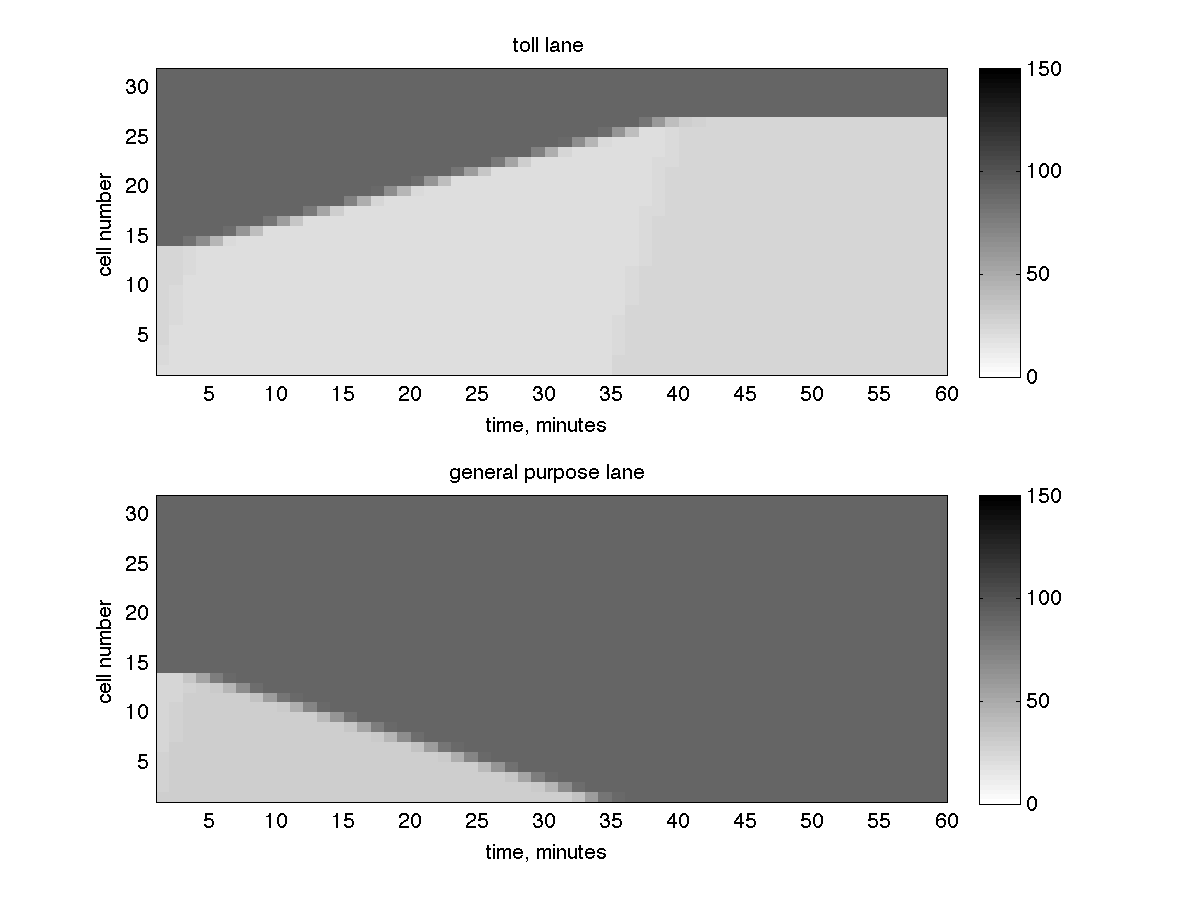}
\caption{Reducing congestion in the toll lane. Scenario 1a. Densities (in vehicles per mile).}
\label{fig:scenario-1a}
\end{figure}

\emph{Scenario~1b}:
The number of uncongested links is sufficiently large,
$i > (K + 1) / 2$, and the toll lane becomes fully uncongested in finite time
(Figure~\ref{fig:scenario-1b}).

\begin{figure}[htbp]
\centering
\includegraphics[scale=\scale]{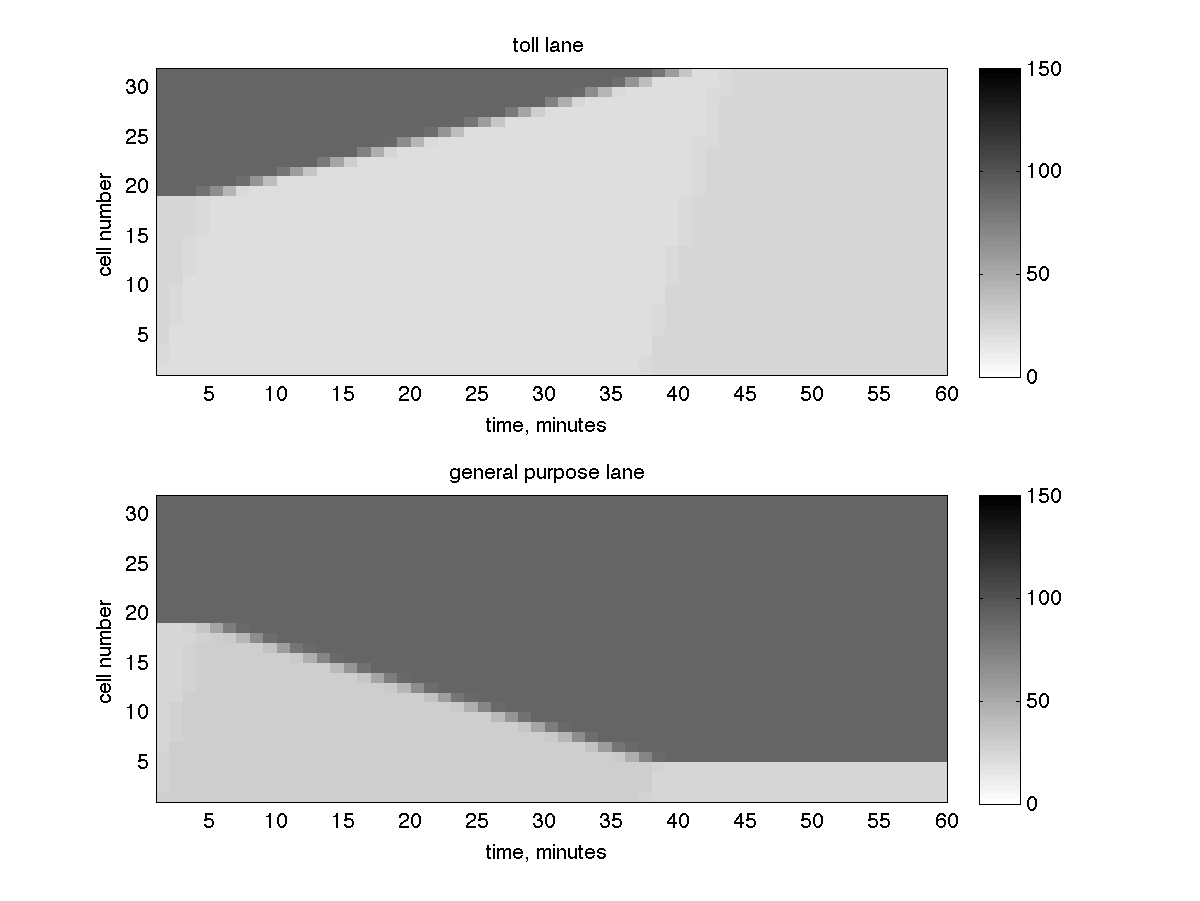}
\caption{Reducing congestion in the toll lane. Scenario 1b. Densities (in vehicles per mile).}
\label{fig:scenario-1b}
\end{figure}

\paragraph{Scenario~2}
Initially, the whole freeway is uncongested,
the incoming flow~$d_0$ equals the bottleneck capacity~$F_{K + 1}^1 + F_{K + 1}^2$.
At some point in time ($t = 5$~minutes), the incoming flow~$d_0$
exceeds the bottleneck capacity and
both the general purpose and the toll lanes become congested,
but the congestion grows faster in the general purpose lane.
The toll lane still becomes partially congested,
because the general purpose lane alone cannot accommodate
the whole excess flow: $d_0 > F_{K + 1}^1 + F_1^2$.
Then (at time $t = 30$~minutes)
the incoming flow~$d_0$ drops to the bottleneck capacity
and the congestion is transferred from the toll lane
to the general purpose lane.
Figure~\ref{fig:scenario-2} illustrates this scenario.

\begin{figure}[htbp]
\centering
\includegraphics[scale=\scale]{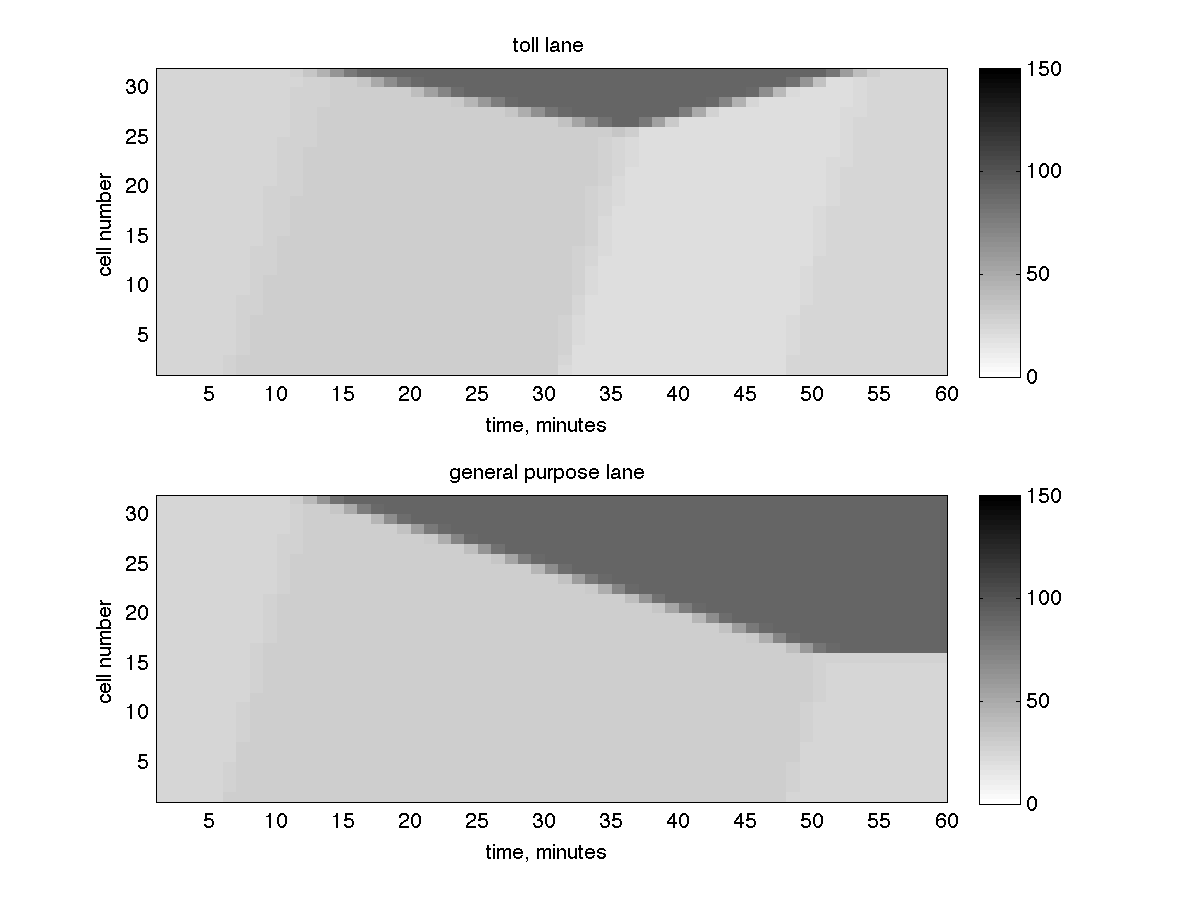}
\caption{Temporarily infeasible incoming flow (scenario 2). Densities.}
\label{fig:scenario-2}
\end{figure}

Now consider a freeway with two entrances (at the beginning and in the middle),
and two exits (in the middle and at the end), see Figure~\ref{fig:freeway-2-2}.
The last link, $K + 1$, is a bottleneck.
The priorities~$p_{i - 1}^{\xi, f}$, $\xi = 1, 2$ and~$p_i^r$
are equal to capacities:
$p_{i - 1}^{\xi, f} = \beta_{i - 1}^f F_{i - 1}^\xi$,
$p_i^r = R_i$.

\begin{figure}[htb]
\centering
\includegraphics{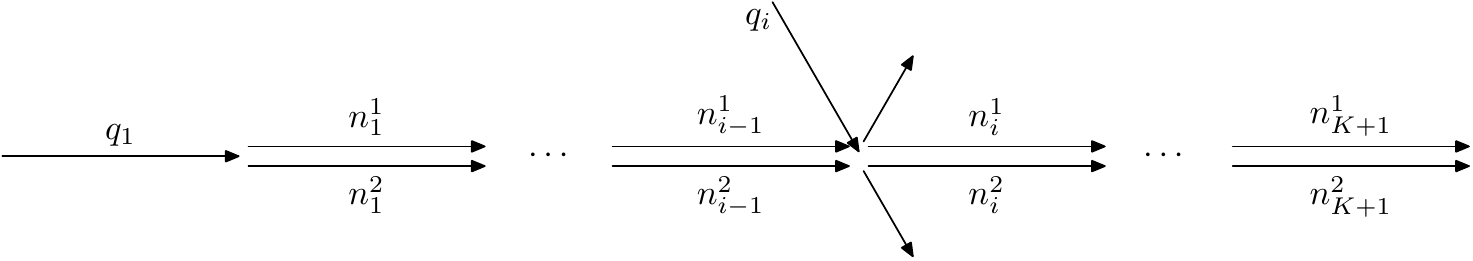}
\caption{Freeway with two entrances and two exits.}
\label{fig:freeway-2-2}
\end{figure}

\paragraph{Scenario 3}
Initially, the toll lane and the general purpose lane
are in the same partially congested equilibrium state.
The incoming flow is feasible, but not strictly feasible.
Flows from both on-ramps (at the beginning and in the middle)
are redistributed,
and the toll lane decongests shifting the extra vehicles to the
general purpose lane.

\begin{figure}[htbp]
\centering
\includegraphics[scale=\scale]{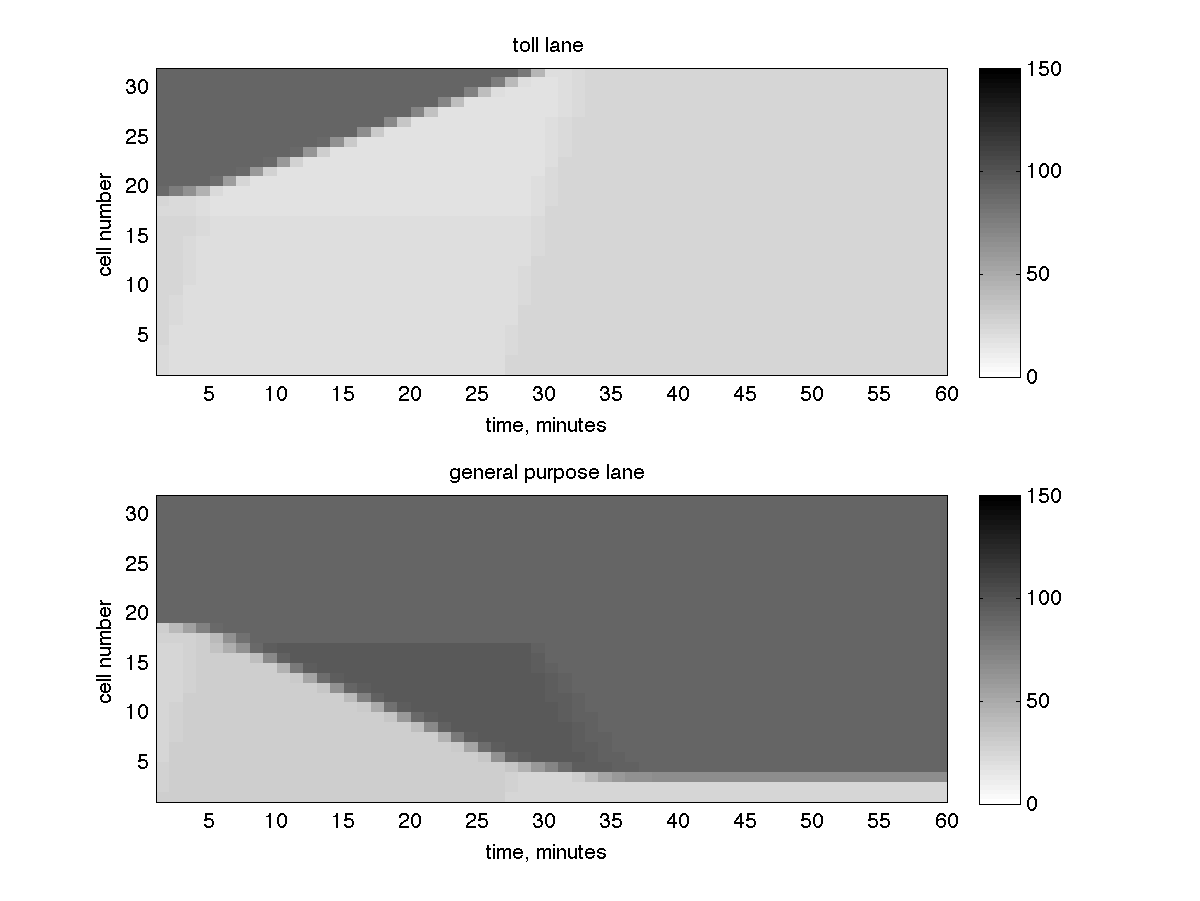}
\caption{Reducing congestion in the toll lane of a freeway with two entrances (scenario~3). Densities.}
\label{fig:scenario-3}
\end{figure}

\section{Tolls as Actuators}\label{sec_tolls}
Now that at a given entrance $i$, the split ratio controller
computed the desired values $\alpha^1_i$ and $\alpha^2_i$,
our goal is to enforce them.
It is done by setting the price for vehicles entering the toll lane.
We consider two ways of setting the price: (1) using the known
Value of Time distribution; and (2) setting up an auction.
These are described next.

\subsection{Value of Time Distribution}\label{subsec-vot}
The Value of Time (VoT) is the marginal rate of substitution of travel time for
money in a travelers' indirect utility function.
In essence, this makes it the amount that a traveler would be willing to pay 
in order to save time, or the amount he would accept as a compensation for the lost
time.
The VoT varies considerably from traveler to traveler and depends upon
the purpose of the journey.
The estimation of the VoT was studied in 
\cite{smallyan01, steimetz05, brownstone05, train09}.

We assume that the VoT distribution is known, $\nu(\pi)$, where
$\pi$ represents the price per time unit.
For a given entrance $i$,
we find the desired price of the time unit, $\pi^\star$ from the equation
\begin{equation}
\int_0^{\pi^\star}\nu(\pi)d\pi = \alpha^2_i.
\label{eq-vot}
\end{equation}
Figure \ref{fig-vot} illustrates the VoT pricing mechanism.
\begin{figure}[htb]
\centering
\includegraphics[scale=.6]{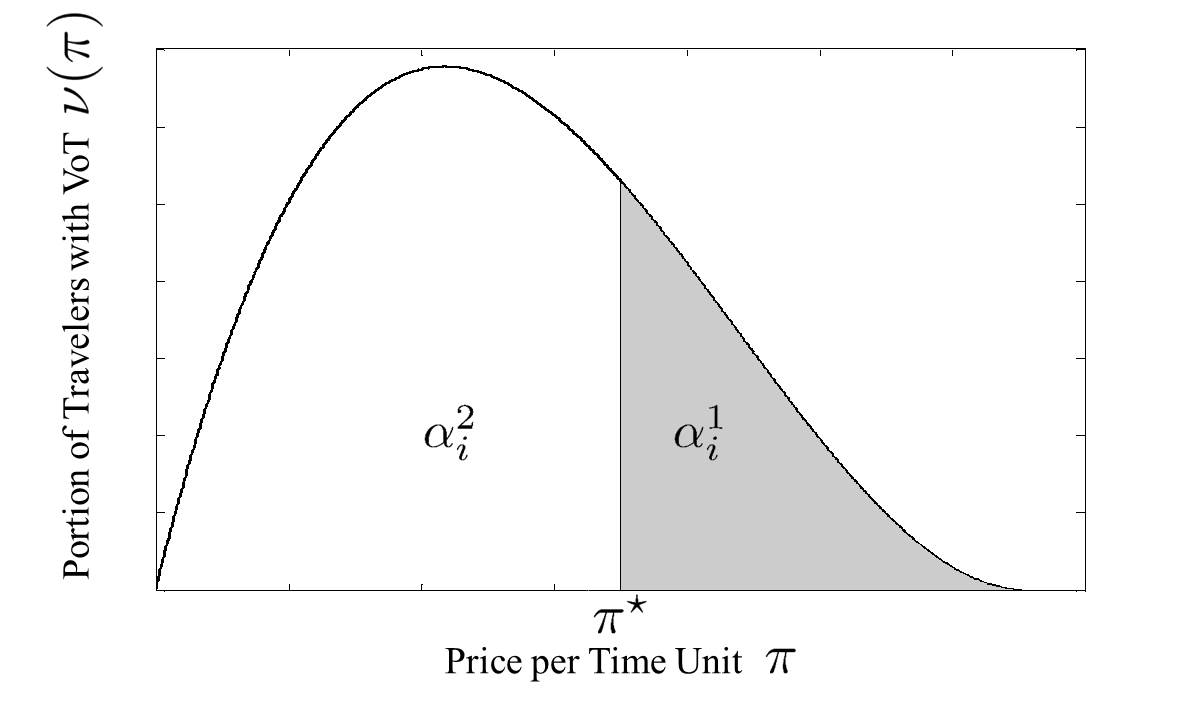}
\caption{Value of Time distribution --- finding price per time unit $\pi$.}
\label{fig-vot}
\end{figure}

To continuously estimate the VoT distribution, we need to know the difference
in travel time between the general purpose and the toll lanes, $\tau$;
the number of vehicles in the general purpose lane, $n^2$;
the price set for the toll lane, $\pi^\star\tau$; and
the amount of money collected, $T$, every certain time period.\footnote{Time
period should be in the order of minutes.}
This will give us enough information to infer the portion of travelers,
whose VoT is higher or equal than $\pi^\star$:
$n^1 = \frac{T}{\pi^\star\tau}$, and
\begin{equation}
\int_{\pi^\star}^\infty\nu(\pi)d\pi=\frac{n^1}{n^1+n^2}.
\label{eq-vot-calib}
\end{equation}

Assuming price $\pi^\star$ varies, we could eventually, estimate
$\nu(\pi^\star)$.
It is rather crude approach, since in different time of day
travelers' VoT is different, but it conveys the idea.

In the case of HOT lanes, where besides paying customers there may be
high occupancy vehicles, we would need to know additionally the number of
vehicles in the HOT lane, $n^1$.
Then, equation (\ref{eq-vot-calib}) is to be slightly modified:
\begin{equation}
\int_{\pi^\star}^\infty\nu(\pi)d\pi=\frac{T}{\pi^\star\tau(n^1+n^2)}.
\label{eq-vot-calib2}
\end{equation}
The main shortcoming of the VoT-based price setting is the difficulty
of estimating the VoT distribution.
If the estimate is too rough, we may end up with underutilized, if
the price is too high, or overly congested, if the price is too low,
toll lane, which will hinder the performance of the overall system
reducing the total output flow.

To reduce the inaccuracy of the VoT based toll calculation, the continuing
calibration of the willingness to pay must be performed.
Such calibration would use a discrete choice model.
Logit model based calibration is described in \cite{michalaka09, lou11}.

\subsection{Auction}\label{subsec-auction}
The alternative method of setting the price for a toll lane is an auction.
The auction based scheme applied to a cordon area congestion pricing
was described in \cite{teodorovic08}.
Here we describe the auction mechanism for managed toll lane
based on the idea of \cite{goldberg06}.
At time $t$, at the entrance $i$, $H=r_i^d(t)$ travelers
make bids $b_1, \dots, b_H$.
Without loss of generality, we assume $b_1\geq\dots\geq b_H$.
Let $h^\star = \mbox{round}(\alpha_i^1H)$.
Vehicles with bids $b_1, \dots, b_{h^\star}$ will be let into 
the toll lane, and each of them will pay $b_{h^\star}$ generating
$b_{h^\star}h^\star$ total revenue for the entrance $i$ at time $t$.

The main advantage of the auction approach is its deterministic outcome:
for each decision point $i$, we can achieve the desired
split ratio coefficients $\alpha_i^1$, $\alpha_i^2$ exactly.

The suggested auction mechanism allows variations.
For instance, parameter $h^\star$ may be chosen maximizing the revenue
without compromising the quality of service:
\[
h^\star = \arg \max_{h\leq\alpha_i^1 H}hb_h .
\]
At the time of this writing, the deployment of an auction as a mechanism
for setting tolls presents certain technical difficulties.
In the coming years, however, the active development of cooperative taxis,
connected and autonomous vehicles and wireless communications will enable
passengers
to determine the amount they would be ready to spend reducing their travel
time at the start or during the trip.
Thus, the auction mechanism for buying a place in the toll lane guaranteeing 
the required time saving may become realistic and practical.

\section{Application: Interstate 680 South in California}\label{sec_application}
In 2013-14 Caltrans\footnote{California Department of Transportation}
conducted a Corridor System Management Plan (CSMP) study for
the Interstate 680 corridor in Contra Costa County \cite{680csmp}.
Certain improvements were considered, including ramp metering, adding
auxiliary lanes and, notably, converting existing HOV lanes
to HOT.\footnote{HOV lane is a lane for High Occupancy Vehicles. 
Typical minimum vehicle occupancy level
for HOV lanes in the U.S. is 2 (2+HOV) or sometimes 3 (3+HOV).
Presently, I-680 corridor has 2+HOV lanes.
HOT stands for High Occupancy or Tolled. 
HOT lane is free for HOVs, others must pay a toll.}
Modeling of improvement scenarios was done at UC Berkeley PATH using
simulation Tools for Operational Planning (TOPL) \cite{topl}.

In this Section we present the simulation for a 10-mile segment of I-680
South corridor extending from Martinez through Concord and Pleasant Hill to
Walnut Creek, shown in Figure \ref{fig-680s}.
It goes from postmile 56 to postmile 46.
An HOV lane spans most of this segment.
The start and end of the HOV lane are shown on the map.

\begin{figure}[htb]
\centering
\includegraphics[scale=.7]{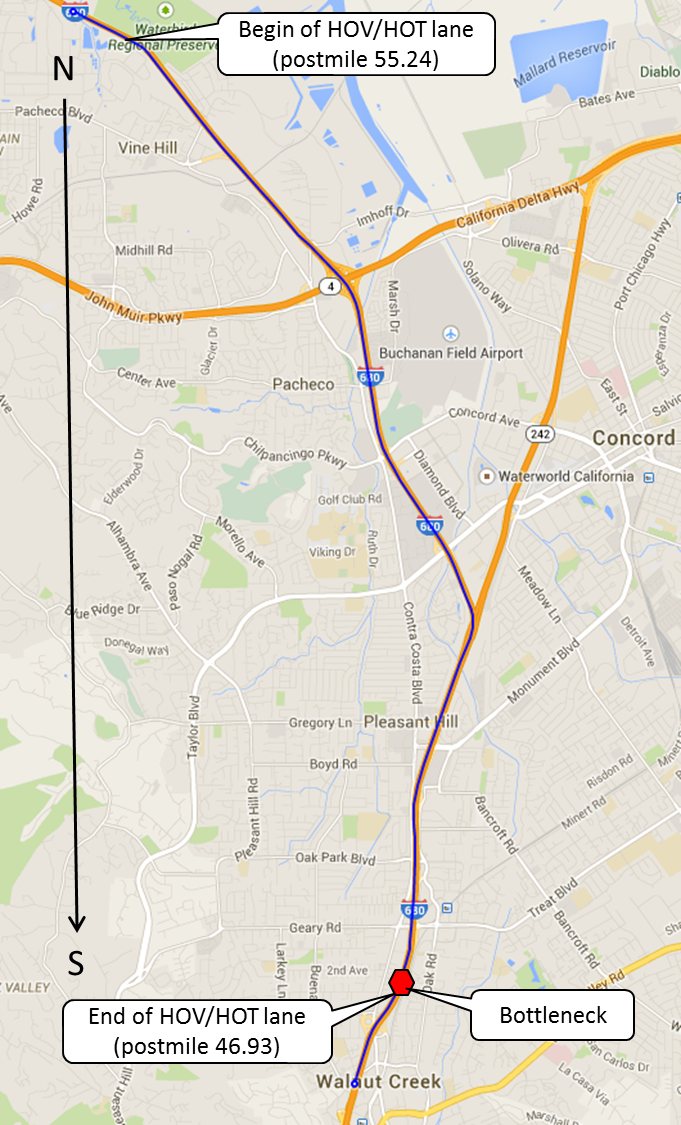}
\caption{Map of the simulated I-680S corridor.}
\label{fig-680s}
\end{figure}

The data collection effort of the CSMP project revealed a severe bottleneck
during the AM peak hours located near the end of the HOV lane.
This bottleneck generates large congestion in the GP lane, while
the HOV lane stays in free flow and is underutilized.

\begin{figure}[htb]
\centering
\includegraphics[scale=.7]{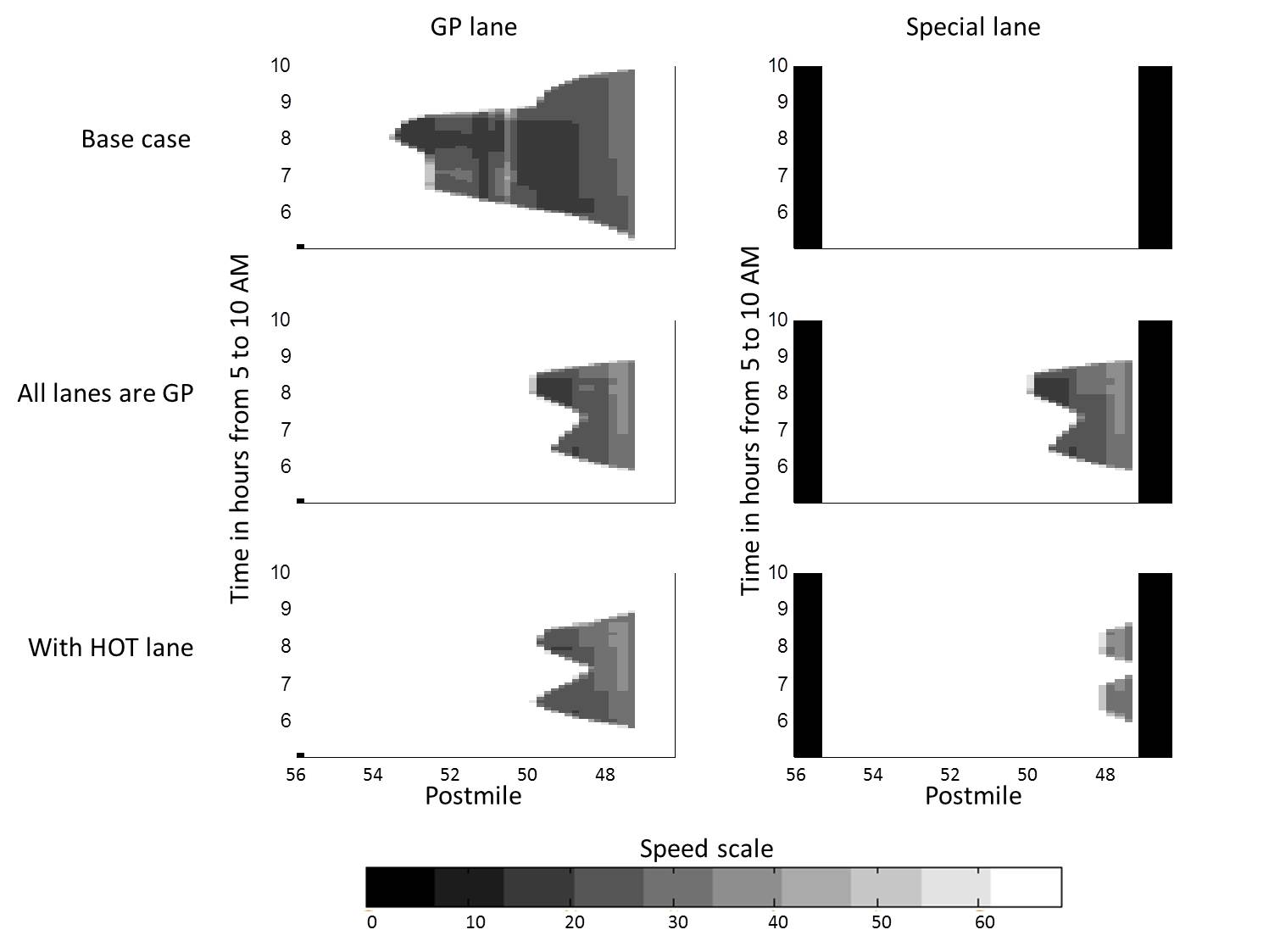}
\caption{Simulation speed contours of the AM peak.
Three cases are considered: base case, or what is currently observed with
an existing HOV lane;
a case when an HOV lane is treated as GP; and
a case when an HOV lane is converted to HOT.
Speed is given in miles per hour.}
\label{fig-sim-contours}
\end{figure}

A macroscopic CTM-based simulation model for the I-680 corridor
was built and calibrated using TOPL and measurement
data collected in the CSMP study.
Three scenarios were simulated:
\begin{enumerate}
\item \emph{base case} that reproduces existing conditions;
\item \emph{HOV lane as GP} that shows how the corridor would
perform if the HOV lane were treated as an extra GP lane;
and
\item \emph{HOV lane as HOT}, where the HOT lane was modeled
instead of the HOV lane using the controller described in this paper.
\end{enumerate}
Whereas the first scenario represents the reality, scenarios
2 and 3 are hypothetical and were explored as part of the planning
exercise.
The input demand was assumed the same in all three scenarios.
Morning peak hours were simulated: from 5 to 10 AM.
This is when travelers currently experience large delays at that
segment of I-680S.

Speed contour maps resulting from the three simulations are
shown in Figure \ref{fig-sim-contours}.
Contours on the left show the speed dynamics in the GP lane,
and contours on the right correspond to the special lane,
which in scenario 1 is HOV, in scenario 2 is GP and in scenario 3 is HOT.
Black stripes on the left and on the right of the special lane contours
correspond to locations, where there is no special lane.
Tables \ref{tab-vmt} and \ref{tab-delay} contain Vehicle Miles Traveled (VMT)
and delay values from the simulated scenarios.

\begin{table}[htb]
\centering
\begin{tabular}{|l|ccc|}\hline
Case & GP lane VMT & Special lane VMT & Total VMT \\ 
\hline
Base case with existing HOV lane & 22502 & 2742 & 25244\\
Special lane is treated as GP & 21087 & 4157 & 25244\\
HOV lane is converted to HOT & 21925 & 3319 & 25244\\
\hline
\end{tabular}
\caption{Vehicle Miles Traveled in the period from 5 to 10 AM
in the three simulated cases.}
\label{tab-vmt}
\end{table}

\begin{table}[htb]
\centering
\begin{tabular}{|l|ccc|}\hline
Case & GP lane delay & Special lane delay & Total delay \\ 
\hline
Base case with existing HOV lane & 250 & 0 & 277\\
Special lane is treated as GP & 77 & 14 & 91\\
HOV lane is converted to HOT & 75 & 2 & 77\\
\hline
\end{tabular}
\caption{Vehicle-hours of delay in the period from 5 to 10 AM
in the three simulated cases.
Total delay in the last column includes delay from on-ramp queues, which
exists only in the base case.}
\label{tab-delay}
\end{table}

In the base case scenario we observe large congestion in the GP lane,
while the HOV lane has no congestion at all.
Opening the HOV lane to everyone (scenario 2) helps a lot, as is evident
from the delay table (Table \ref{tab-delay}), but congestion
is not eliminated completely, and now both GP and HOV lanes
have the same congestion pattern, so travelers have no mode choice.
Not surprisingly, the HOT lane scenario shows the best performance
of the three both in congestion mitigation and keeping the
special lane as free as possible.

One interesting detail to notice is the slight reduction of the GP lane delay
in scenario 3 compared to scenario 2.
That happens despite the fact that the GP lane has more vehicles in scenario 3
than in scenario 2 (Table \ref{tab-vmt}).
This happens because the HOT controller try to keep the
flow in the GP lane as close to capacity as possible, and thus more
vehicles travel with higher speed.

\section{Conclusion}\label{sec_conclusion}
In this paper we described the toll lane control algorithm for freeways
in the context of the link-node CTM with the modified node model.
Its design was presented in two stages: (1) the supply control computing
the desired split ratios for the incoming flows with the goal of
keeping the toll lane in free flow as long as and as much as possible,
yet fully utilized; and (2) the demand control, setting the price
so that the flows between the toll and the general purpose lanes
were distributed according to the computed split ratios.

We described the equilibrium of the traffic model and pointed out
that activating the toll lane is meaningful only in the partially
congested equilibrium, when the general purpose lane has enough
storage space to accommodate extra vehicles from the toll lane
without reducing the total output flow of the system.
In free flow state with feasible demand there is obviously no need
for the toll lane, as it does not provide the benefit of the
shorter travel time.
When the freeway is in the fully congested equilibrium, the deployment
of a toll lane is not recommended either, as it can only harm
the overall system performance by oversaturating the general purpose
lane, creating queues at entrances
and reducing the total output flow of the system.

The split ratio control algorithm presented in the paper behaves as follows:
\begin{itemize}
\item in free flow state with feasible demand it is essentially non-active 
letting the mainline operate as one piece;
\item in the case of infeasible demand, the excess flow is directed into
the general purpose lane, letting it to congest first;
\item once the general purpose lane is fully congested, the algorithm lets
the toll lane to congest;
\item if it comes to the point when the toll lane is fully congested,
the algorithm is deactivated, as there is nothing it can improve
at this point, so the mainline returns back to one piece;
\item once the demand drops again, and the freeway starts to decongest,
the algorithm first brings the toll lane into the free flow state,
and then the general purpose lane.
\end{itemize}
All these manipulations with flows are done through split ratios at
the end nodes of the entrance links.
The control algorithm computes these split ratios.

The split ratio control algorithm can be extended to the case of HOT lanes.
For that, however, one has to introduce and deal with multiple
vehicle types, as described in \cite{kurmur09}.
In this case, two vehicle types will be needed: High Occupancy Vehicles (HOVs)
and Single Occupancy Vehicles (SOVs).
The HOV flow won't be controlled, but will be governed by the algorithm of
finding the path of the least resistance,\footnote{At nodes where flow
exchanges are allowed between the general purpose and the HOT lane, the
HOVs will choose their next downstream link based on higher speed or lower 
density.}
presented in \cite{kurmur09}, whereas the SOVs will be subject to the toll 
control,
and the available space in the general purpose and the HOT lanes will be 
computed accounting for both vehicle types.

Finally, we discussed two mechanisms for price setting --- the VoT
distribution and the auction.
The former is commonly used, but provides only rough estimates of the actual
traveler behavior, resulting mostly in the underutilization or
oversaturation of the toll lane.
The latter mechanism is hardly deployable at current time due to technical
difficulties, but in the foreseeable future with the emerging technologies,
especially, those of autonomous vehicles and ride shares,
it looks promising.

\section*{Acknowledgement}
This work is supported by the California Department of Transportation
(Caltrans) under the Connected Corridors program and by
the Russian Foundation for Basic Research (grants 13-01-90419
and 12-01-00261-a).

\bibliographystyle{plain}
\bibliography{traffic}

\end{document}